\renewcommand\footnotetextcopyrightpermission[1]{} 
\newcommand{\xiuwen}[1]{{\color{black} #1}}
\newtheorem{theorem}{Theorem}
\newtheorem{example}{Example}
\newtheorem{lemma}{Lemma}
\newtheorem{definition}{\noindent Definition}
\def\BibTeX{{\rm B\kern-.05em{\sc i\kern-.025em b}\kern-.08emT\kern-.1667em\lower.7ex\hbox{E}\kern-.125emX}}
\begin{document}

%
\title{Scalable Community Detection over Geo-Social Network}

%


\author{Xiuwen Zheng}
\affiliation{
  \institution{San Diego Supercomputer Center\\ University of California San Diego}
  \city{La Jolla}
  \state{CA}
  \postcode{92093}
}
\email{xiz675@eng.ucsd.edu}

\author{Qiyu Liu}
\affiliation{%
  \institution{The Hong Kong University of Science and Technology}
  \city{Hong Kong SAR}
  \country{China}}
\email{qliuau@cse.ust.hk}

\author{Amarnath Gupta}
\affiliation{
  \institution{San Diego Supercomputer Center\\ University of California San Diego}
  \city{La Jolla}
  \state{CA}
  \postcode{92093}
}
\email{a1gupta@ucsd.edu}

 




%

%
\begin{abstract}
We consider a community finding problem called \emph{\underline{C}o-located \underline{C}ommunity \underline{D}etection} (CCD) over geo-social networks, which retrieves communities that satisfy both high structural tightness and spatial closeness constraints. To provide a solution that  benefits from existing studies on community detection, we decouple the spatial constraint from graph structural constraint and propose a uniform CCD framework which gives users the freedom to choose customized measurements for social cohesiveness (e.g., $k$-core or $k$-truss). For the spatial closeness constraint, we apply the bounded radius spatial  constraint  and  develop an  exact algorithm together with effective pruning rules. To further improve the efficiency and make our framework scale to a very large scale of data, we propose a near-linear time approximation algorithm with a constant approximation ratio ($\sqrt{2}$). We conduct extensive experiments on both synthetic and real-world datasets to demonstrate the efficiency and effectiveness of our algorithms.
\end{abstract}

%
%


%


\keywords{Geo-Social Network, Co-located Community Detection, Computational Geometry, Big Spatial Data}

%
\maketitle

\section{Introduction}\label{sec:introduction}
Finding densely connected structures in social networks, a.k.a., communities, has been extensively studied in past decades. Most of the prior research focus on  finding communities  in social networks \cite{fortunato2010community,  lancichinetti2009community, ahn2010link, huang2014querying}. However, some researchers  \cite{fang2017effective}, have argued that  for location-aware applications like location-based event recommendation and market advertisement,
 each community of people should be not only socially connected but also  be in close locational proximity to each other. Detecting such communities is called the \emph{\underline{C}o-located \underline{C}ommunity \underline{D}etection} (CCD)  problem.
One reason for this increased emphasis on CCD problems is data availability -- the growing usage of mobile based services offered by social media applications that allow users to publish their real-time locations. In our own prior work \cite{desai2016pircnet,weibel2017hiv}, we investigated the formation of HIV related communities and determined that geographic proximity is a stronger predictor of community formation among users who tweet about HIV-related health issues compared to pure network proximity on Twitter.

 Some researchers have considered spatial location attributes to discover co-located communities  \cite{chen2015finding, chen2018maximum, fang2017effective, wang2018efficient}. We concluded these previous work and our work as Table~\ref{tab:prior_work} shows. These works can be classified into three categories based on their goals: 1)\underline{C}ommunity \underline{D}etection (CD): to find all co-located communities; 2) \underline{C}ommunity \underline{S}earch (CS) : finding personalized communities for query vertices; 3) \underline{M}aximum \underline{C}ommunity \underline{M}embers (MCM): find the maximum co-located community with the largest number of members, which is neither CS or CD. In this paper, we solve a community detection problem to find out all co-located communities. 
\vspace{-1.2em}
\begin{table}[htbp]
    \caption{Summary of existing studies.}
    \centering
    \footnotesize
    \begin{tabular}{|c|c|c|c|}
    \hline
     Spatial Constraint Def. & Algo. & CD or CS & Distance Bound\\ \hline\hline
     Modify Community Criteria &  Modified CNM \cite{chen2015finding}& CD &No guarantee \\\hline
     Minimum Covering Circle & AppAcc \cite{fang2017effective} & CS& No guarantee\\\hline 
    \multirow{3}{*}{Bounded Circle}&RotC$^{+}$ \cite{wang2018efficient} & CS &1\\
    & exact & CD & 1 \\
    & approx & CD & $\sqrt{2}$ \\ \hline 
    \multirow{3}{*}{All-pair Bound}& AdvMax \cite{zhang2017engagement}&  CD& 1\\
    &EffiExact\cite{chen2018maximum}&MCM& 1\\
    &Apx2\cite{chen2018maximum}& MCM& $\sqrt{2}+\epsilon$\\\hline
    \end{tabular}
    \label{tab:prior_work}
    \vspace{-1em}
\end{table}

To give the definition of co-located community, we consider social and spatial cohesiveness constraints separately. There is already a significant body of research exists on community detection  and community search  on pure social network \cite{newman2004fast, white2005spectral, brandes2007modularity, cui2014local, sozio2010community, cui2013online, huang2015approximate}, 
we focus primarily on the  spatial constraint aspect of the CCD problem. The existing work can  be categorized into  four types based on their approaches to defining spatial constraint. The first technique is to define new criteria of community by integrating both social  and spatial information.  For example, \cite{chen2015finding} modifies the modularity function \cite{newman2004fast} by introducing a distance decay function and then provides a community detection algorithm based on fast modularity maximisation algorithm. There are two main limitations of this technique. The first and the most serious one is that it can not provide a  geographic distance bound guarantee for members in a community. Secondly, it couples social and spatial information which is less flexible if users prefer other community detection techniques, e.g. $k$-core or  $k$-truss decomposition. 

Different from the first technique, other existing works define  spatial constraint without any regard to  social cohesiveness. Fang et al \cite{fang2017effective} apply the spatial minimum covering circle \cite{elzinga1972minimum, elzinga1972geometrical} to ensure that each community  discovered  maintains high spatial compactness. It provides a community search algorithm that returns a $k$-core  so that  a  spatial circle with the smallest radius can cover all community members. However,  it  can not guarantee a consistent distance bound for different query vertices and the case study in experiments section well presents an example to demonstrate it. Some  research \cite{zhang2017engagement, chen2018maximum}   apply the all-pair distance constraint which  requires that the distance between any two users in a community is within a user-specified threshold. They can  guarantee a  bounded distance constraint for all members, however, \cite{zhang2019efficient} proves that the problem to enumerating all maximal clusters satisfying this spatial constraint is NP-hard. \cite{zhang2017engagement, chen2018maximum} provide a clique-based algorithm to enumerate maximal clusters and then find $k$-cores in each cluster. Since they solve the MCM problem where only \emph{the maximum} community is returned, they develop pruning rules when enumerating clusters. However, it is impractical to introduce it to solve our CD problem where \emph{all} communities should be detected. Even though \cite{chen2018maximum} provides an approximation algorithm based on grids, the approximation ratio is not a constant.



\cite{wang2018efficient} defines spatial constraint in a similar way to our work. It applies radius-bounded circle approach which requires that any returned  community can be covered geographically by a circle with user-specified radius, and it uses $k$-core to ensure social cohesiveness. It solves a community search problem which returns co-located communities for a specified query user.  Different from the problem in \cite{chen2018maximum}\cite{zhang2017engagement} which is NP-hard, the problem in   \cite{wang2018efficient} can be solved in polynomial time. 
However, the approach in \cite{wang2018efficient} cannot be directly modified to solve our CD problem because of time complexity issue. Given an query user, it enumerates  all candidate circles with user-specified radius passing  any two nearby users and finds the $k$-core containing query user in each circle. For any query user, there can be $O(n^2)$ candidate circles where $n$ is number of users, and thus calculating all nodes enclosed by each circle and then finding cores from them can be very time-consuming. \cite{wang2018efficient} solves a community search problem, thus $n$ is practically a small value and time complexity is not a severe issue. However, in our problem, more efficient methods should be proposed.

\noindent \textbf{Contributions.} Our major contributions are listed as follows.
\begin{itemize}
\item We design a uniform framework which decouples social and spatial constraints so that users have high freedom to define social cohesiveness  (e.g., $k$-truss or $k$-core) and to choose  existing community detection algorithms.
    \item We design an  exact  spatial constraint checker which return all maximal communities satisfying spatial constraint efficiently.
    \item To further reduce the time complexity, we design a near-linear approximation algorithm with a constant performance guarantee ($\sqrt{2}$-bounded) for the spatial constraint checker.
    \item Our framework can be modified slightly to solve community search problem for any given query user. 
\end{itemize}

\section{Preliminaries}\label{sec:pre}
In this section, we formally define our data model and problem,  and present the framework for co-located community detection. 

\subsection{Problem Definition}\label{subsec:problem_def}
\begin{definition}[Geo-Social Network (GeoSN)]
A geo-social network (GeoSN) is a directed graph $G=(V, E)$ where each $v\in V$ denotes a user  associated with a spatial location $(x_v, y_v)\in \mathbb{R}^2$, and $E$ maintains the relationship (e.g., friendship) among users. 
\end{definition}

Given a geo-social network, the objective of this paper is to find all communities that simultaneously satisfy the spatial cohesiveness  constraint and the social connectivity constraint. 
We first introduce the definition of a maximal co-located community. 
\begin{definition}[Maximal Co-located Community (MCC)]
Given a GeoSN $G$, a maximal co-located community is a set of users which form a  subgraph $J\subseteq G$ satisfying three constraints,
\begin{itemize}
\item \textbf{Social Connectivity:} $J$ should satisfy a user-specified social constraint  over a graph property like  $k$-truss, $k$-core, etc.
\item \textbf{Spatial Cohesiveness:}  Given a distance threshold $d$, all the vertices  of $J$ can be geographically enclosed by a circle with diameter $d$.
\item \textbf{Maximality:} There does not exist a subgraph $J' \supsetneq J$ which satisfies social connectivity and spatial cohesiveness constraints.  
\end{itemize}
\end{definition}

The following formally defines the  \textit{$d$-MCCs Detection} problem and presents an example,
\begin{definition}[$d$-MCCs Detection] \label{def:mcc}
Given a geo-social network $G$,  a distance threshold $d$ and social constraint, the problem is to find all maximal co-located communities.  
\end{definition}

\begin{example}
Fig.~\ref{fig:gsn} (a) presents a Geo-social network where users are denoted as circles and relationships are denoted as lines. Each user is associated with a location in $\mathbb{R}^2$ space. Suppose that high social cohesiveness is defined as a minimum degree of at least 2, then there are two communities found in the GeoSN denoted as blue circles  and orange circles respectively. Suppose that the distance threshold is set as 4 grids, then users can be divided into four overlapped groups based on their locations denoted as four shadow circles in Fig.~\ref{fig:gsn} (b).  Combining spatial and social information, there are two MCCs detected: $\{a, b, c, d\}$ and $\{i,j,k,l\}$. 
\end{example}

\begin{figure}[htbp]
\centering
\includegraphics[width=0.4\textwidth]{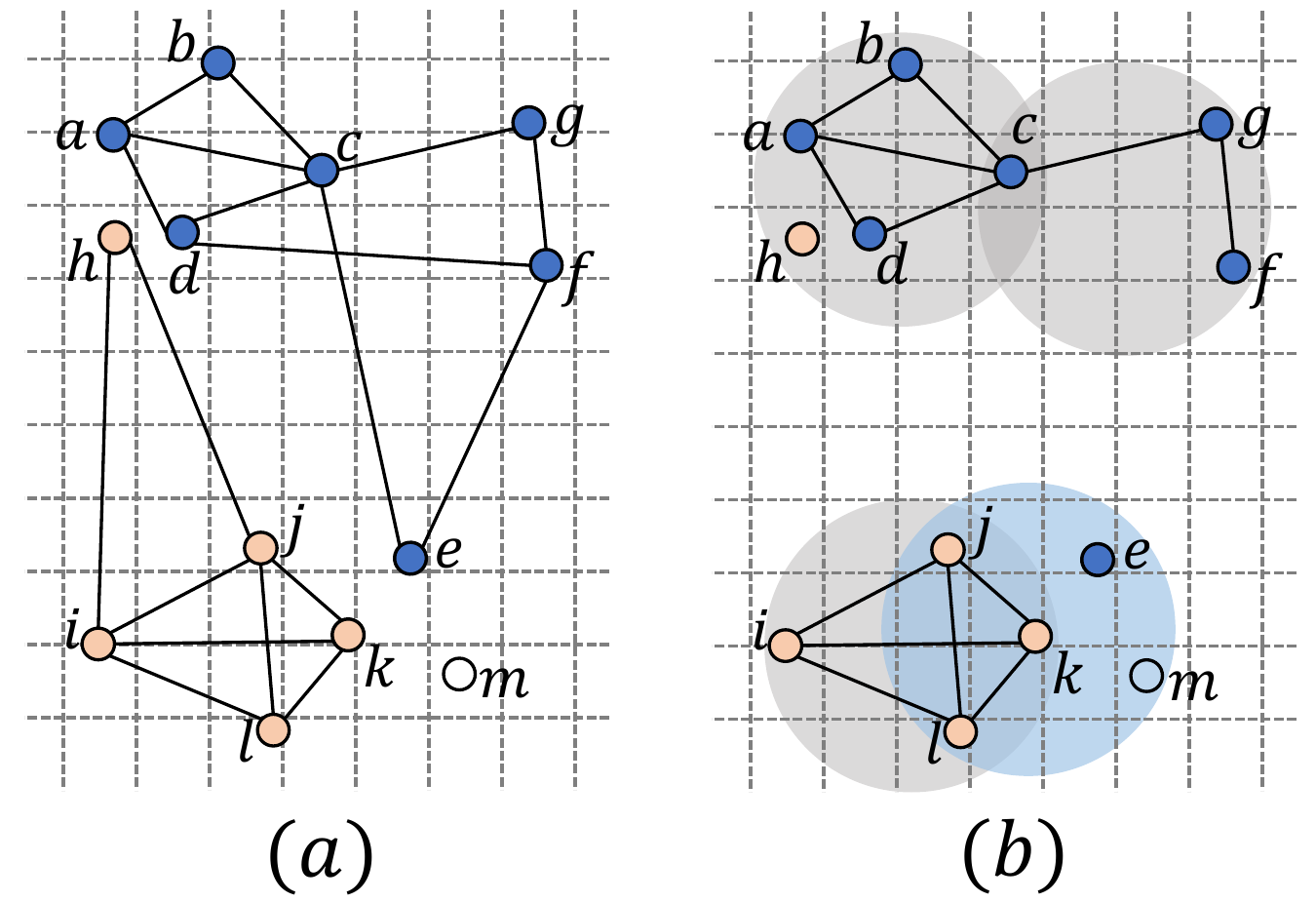}
\caption{Illustration of geo-social network and MCCs.}\label{fig:gsn}
\end{figure}
\subsection{Framework}\label{subset:pre_for_exact_algo}

To de-couple the  spatial constraint from MCCs detection, we provide the definition of  Global Spatial Cluster  merely based on spatial constraints.
\begin{definition}[Global Spatial Cluster]
A Global Spatial Cluster (GSC) $C\in V$ is a subset of users satisfying two constraints,
\begin{itemize}
\item{\textbf{Co-located:} The cluster members of $C$ satisfy the spatial constraint (i.e., being enclosed by a circle with diameter $d$).}
\item{\textbf{Maximality:} If $C$ is an GSC, there does not exist a set of users $C'\supsetneq C$ satisfying the co-located  constraint. }
\end{itemize}
\end{definition}
Based on this definition,  MCCs can be detected in three steps (Algorithm~\ref{algo:framework}):
(1)  find all Global Spatial Clusters  (line 1), (2) for each GSC, get the GeoSN induced by this set of users and find all local MCCs in this subgraph based on social constraint (lines 2 - 4), (3) find all MCCs by removing local MCCs which are subsets of any other MCC (FindGlobalMCC function).
 Note that in the first step (line 2),  i.e, finding GSCs, the parameters of social constraint are also passed to the spatial algorithm so that some simple pruning techniques can be implemented. 

\begin{algorithm}
\footnotesize
\caption{\emph{Framework\_MCCs\_Detection}}\label{algo:framework}
\KwIn{A  geo-social network $G = (V, E)$, distance threshold $d$, social constraint parameter $k$ }
\KwOut{A set of all MCCs: $MCC$}
\SetKwFunction{FindGlobalMCC}{FindGlobalMCC}
\SetKwProg{Fn}{Function}{:}{}

\tcc{apply spatial algorithm to get GSCs}
$GlobalMS\gets$ \texttt{SpatialAlgorithm}$(V, d, k)$\;
$local\_mcc \gets \{\}$\;
\tcc{apply CD algorithm on the subgraph induced by each GSC}
\For{$GSC$ in $GlobalMS$}{
$local\_mcc.addAll($ \texttt{CommunityDetection}$(GSC, G, k))$\; 
}
\tcc{find global MCCs}
$MCC \gets$ \FindGlobalMCC{$local\_mcc$}\;
\textbf{return} $MCC$\;

\end{algorithm}
\setlength{\textfloatsep}{0pt}

\begin{example}
We still take the GeoSN in Fig.~\ref{fig:gsn} (a) as an example and keep the same constraint definitions to illustrate the procedures. The first step returns four GSCs detected as the shadow circles in (b) show. In the  social subgraph induced by vertices in each GSC, detect the local MCCs based on social constraint, then we get three sets: $\{a, b, c, d\}$, $\{i, j, k, l\}$, and $\{j, k, l\}$. By calling the function FindGlobalMCC, $\{i, k, l\}$ covered in the blue shadow circle is removed from MCCs. Thus, we detect two MCCs: $\{a, b, c, d\}$ and $\{i, j, k, l\}$.
\end{example}
By adding two more procedures to this framework, we can easily solve the corresponding  community search problem for given query user. Lines 1 and 2 find all candidate nodes that are within euclidean distance $d$ from query node $q$ and extract  the small subgraph formed by this set of nodes. After applying Algorithm~\ref{algo:framework} to get all MCCs in the subgraph, line 4  filters out the MCCs that does not contain $q$. 
\vspace{-1em}
\begin{algorithm}
\footnotesize
\caption{\emph{Framework\_MCCs\_Search}}\label{algo:framework2}
\KwIn{A  geo-social network $G = (V, E)$, query user $q$, distance threshold $d$, social constraint parameter $k$ }
\KwOut{A set of $q$-MCCs: $qMCCs$}
\SetKwFunction{FrameworkMCCsDetection}{FrameworkMCCsDetection}
\SetKwFunction{FindGlobalMCC}{FindGlobalMCC}
\SetKwProg{Fn}{Function}{:}{}

\tcc{get nearby users to form a small network}
 $V' = \{u| ed(u, q) \leq d\}$, $E' = \{(u_1, u_2)|u_1, u_2 \in V',  (u_1, u_2)\in E\}$\;
 $G' \gets (V', E')$\;
 \tcc{apply MCCs detection framework to get candidate MCCs}
$MCCs \gets$ \FrameworkMCCsDetection$(G', d, k)$\;
$qMCCs \gets$ \FindGlobalMCC{$\{mcc \mbox { for }mcc \in MCCs \mbox{  if }  q \in mcc\}$}\;
\textbf{return} $qMCCs$\;

\end{algorithm}
\setlength{\textfloatsep}{0pt}
\vspace{-1.5em}

The following three sections focus on developing spatial algorithms to detect GSCs, and in the experiment section, we apply Algorithm~\ref{algo:framework} with $k$-core or $k$-truss as social constraint to detect all MCCs on five datasets.
\section{Exact Spatial Algorithm}

In this section, we will give an exact algorithm  for detecting all Global Spatial Clusters in $\mathbb{R}^2$ space.
The basic idea is to transform the input spatial space from Cartesian coordinate system to polar coordinate system, and based on which an angular sweep procedure is repeatedly invoked for each node to ensure that no GSC is missed.

\subsection{Local Spatial Cluster}
Global Spatial Cluster is defined based on covering circle with fixed radius, and the following will give the definition of  a more restricted covering circle, called $v$-bound circle, and based on which Local Spatial Cluster (LSC) is defined.

\begin{definition}[$v$-bounded circle]\label{def:bounded_circle}
Given a point $v\in\mathbb{R}^2$, if a circle with diameter $d$ (user-specified distance threshold) passes $v$, then it is called a $v$-bounded circle denoted as $C_v$. 
\end{definition}
For a given point $v$, set it as  reference point and $x$-axis as  reference direction to build a polar coordinate system. If the center of a $v$-bounded circle $C_v$ has coordinate $(r, \theta)$ in this polar coordinate system, denote this circle as $C_v(r,\theta)$\footnote{We alternatively use $C_v(r, \theta)$ and $C_v$ if the context is clear.} where $\theta \in (-180^{\circ}, 180^{\circ}]$. Now we give the definition of $v$-Local Spatial Cluster as follows.

\begin{definition}[$v$-Local Spatial Cluster (LSC)]\label{def:LSC}
Let $r = d/2$, a $v$-Local Spatial Cluster $L_v$ is a set of points enclosed by circle $C_v(r, \theta)$ such that there does not exist a  circle  $C_v(r, \theta')$ enclosing  a proper superset of $L_v$. Denote the set of all $v$-LSCs for a fixed $v$ as $\mathcal{L}_v$.
\end{definition} 
We then have the following lemma showing the relationship between global and local  spatial clusters, which is the backbone of the exact spatial algorithm.


\begin{lemma}\label{lemma:correctness}
Given a set of points $V$ in $\mathbb{R}^2$ and a distance threshold, denote the set of all GSCs as $\mathcal{G}$, then  $\mathcal{G}\subseteq \cup_{v\in V}\mathcal{L}_v$.
\end{lemma}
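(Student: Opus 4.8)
The plan is to prove the set inclusion by showing that every GSC $G\in\mathcal{G}$ is itself a $v$-LSC for a suitably chosen $v\in G\subseteq V$, which immediately yields $G\in\mathcal{L}_v\subseteq\bigcup_{v\in V}\mathcal{L}_v$. The argument splits into a geometric repositioning step and two short logical deductions from the maximality of $G$. Throughout I write $r=d/2$.

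First I would establish a sliding lemma: among all radius-$r$ circles enclosing $G$, there is one whose boundary passes through some point $v\in G$. Since $G$ is a GSC it is enclosed by at least one radius-$r$ circle, say with center $c$. Consider the set of admissible centers $R=\bigcap_{p\in G}\overline{B}(p,r)$, i.e.\ those $c'$ with $\lvert c'-p\rvert\le r$ for every $p\in G$. This $R$ is nonempty (it contains $c$), closed, bounded (as $R\subseteq\overline{B}(p_0,r)$ for any fixed $p_0\in G$), and convex, hence compact and convex. Maximizing an arbitrary nonzero linear functional over $R$ attains its value at a boundary point $c^\ast$; were $c^\ast$ interior to $R$ the functional could be increased, so at least one constraint is active, meaning $\lvert c^\ast-v\rvert=r$ for some $v\in G$. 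The circle $C$ of radius $r$ centered at $c^\ast$ then encloses all of $G$ (because $c^\ast\in R$) and passes through $v$.

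Next, placing the polar coordinate system at this $v$, the circle $C$ becomes a $v$-bounded circle $C_v(r,\theta)$ in the sense of Definition~\ref{def:bounded_circle}. Let $S\subseteq V$ be the set of points enclosed by $C_v(r,\theta)$; clearly $G\subseteq S$. Here I invoke maximality of $G$: since $S$ is a set of points enclosed by the single radius-$r$ circle $C_v(r,\theta)$, it satisfies the co-located constraint, so if $S$ contained any point outside $G$ it would be a co-located set strictly larger than $G$, contradicting that $G$ is a GSC. Hence $S=G$, so $C_v(r,\theta)$ encloses exactly $G$. Finally I check the maximality clause of Definition~\ref{def:LSC}: if some $v$-bounded circle $C_v(r,\theta')$ enclosed a proper superset $G'\supsetneq G$, then $G'$ would be a co-located set strictly containing $G$, again contradicting maximality of the GSC $G$. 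Therefore no $v$-bounded circle encloses a proper superset of $G$, and $G$ is a $v$-LSC.

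I expect the geometric repositioning step to be the main obstacle, since it is the only part requiring genuine computational-geometry reasoning: one must argue that an enclosing circle can be nudged so that its boundary touches a data point without releasing any enclosed point, which is exactly the assertion that the compact convex admissible-center region $R$ possesses a boundary point at which a distance constraint is tight. The two uses of maximality afterward are immediate consequences of the GSC definition, and the only degenerate case, $\lvert G\rvert=1$, is handled trivially since a single point already lies on a circle through it.
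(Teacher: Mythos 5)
Your proof is correct, and it reaches the same structural conclusion as the paper's proof --- every GSC $G$ is a $v$-LSC for some $v\in G$ --- but your key geometric step is argued by a genuinely different method. The paper's repositioning is constructive: it takes the enclosing circle centered at $O$, picks the farthest point $a\in G$ from $O$ (at distance $r'\le r$), slides the center along the ray $\overrightarrow{aO}$ to the point $Q$ with $ed(a,Q)=r$, and verifies containment by the triangle inequality, $ed(b,Q)\le ed(b,O)+(r-r')\le r$ for every $b\in G$; the touching point is thus named explicitly. You instead work with the admissible-center region $R=\bigcap_{p\in G}\overline{B}(p,r)$, observe it is nonempty, compact and convex, and extract a boundary point $c^\ast$ by maximizing a linear functional, at which some constraint $\lvert c^\ast-v\rvert=r$ is active. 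This is non-constructive but avoids any coordinate computation; one small caveat is that the inference ``$c^\ast$ on the boundary of $R$ implies some constraint is active'' silently uses finiteness of $G$ (it can fail for infinite intersections of balls), which is harmless here since $V$ is a finite set of users, but is worth stating. A genuine merit of your write-up over the paper's is that you spell out the two maximality deductions that the paper compresses into the phrase ``i.e., $G$ is an $a$-LSC'': first, that the repositioned circle encloses \emph{exactly} $G$ (any extra enclosed point of $V$ would give a co-located proper superset, contradicting GSC maximality), and second, that no $v$-bounded circle encloses a proper superset of $G$ (same reason). Both conditions are required by Definition~\ref{def:LSC}, and the paper leaves them implicit.
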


\begin{proof}
 As shown in Fig.~\ref{fig:lemma2}, let all the small  circles consist  a GSC $G$,  by definition, there is a circle with radius $r = d/2$ covering them (shown as the large  black circle).   Let $ed(u, v)$ denote the spatial distance (Euclidean distance) between two users.  W.o.l.g., assume that $a$ is the farthest point  in the circle from  $O$ and then  the dashed  circle centered at $O$ with radius $r' = ed(a, O) < r$ still encloses all points in $G$. Find a point $Q$ on the line $\overrightarrow{aO}$ such that $ed(a, Q) = r$ and  get the grey circle centered at $Q$ with radius $r$ shown as the grey circle. For any point $b\in G$, based on triangle inequality, $ed(b, Q) \leq  ed(b, O) + r-r'  \leq r$. Thus all points in $G$ can be covered by the $a$-bounded grey circle, i.e.,  $G$ is an $a$-LSC. 
\end{proof}

 \begin{figure}[htbp]
 \centering
 \includegraphics[width=0.2\textwidth]{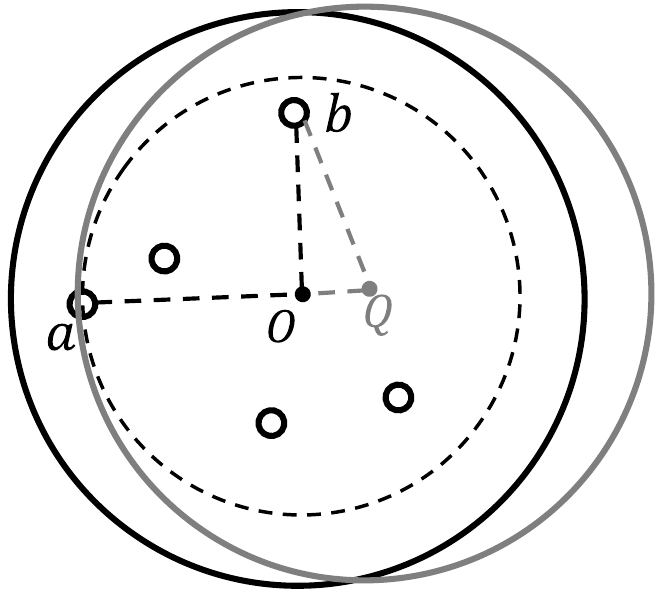}
 \caption{Illustration of Lemma~\ref{lemma:correctness}.}\label{fig:lemma2}
 \vspace{-1.7em}
\end{figure}
\subsection{Searching All LSCs}\label{subsec:find_LSC}

Lemma~\ref{lemma:correctness} shows that the problem of finding all GSCs can be solved by calculating $v$-LSCs $\mathcal{L}_v$ for every $v\in V$. To efficiently calculate $\mathcal{L}_v$ for a given reference point $v$, in this subsection, we introduce  the \emph{Angular Sweep}-based technique.


Suppose that circle $C_v(r, \theta)$ rotates counterclockwise, i.e., $\theta$ increases from $-180^{\circ}$ to $180^{\circ}$,  for each point within distance $d = 2r$ from $v$, we consider two special events: it first enters $C_v$ and it quits $C_v$, and we call the angles $\theta$ at these two special events as start angle $\theta.start$ and end angle $\theta.end$ respectively. When $\theta \in [\theta.start, \theta.end]$, the circle $C_v(r, \theta)$ always encloses  this point. Figure~\ref{fig:rotate-illustrate} illustrates such rotation process.
\begin{figure}[t]
\hspace{-0.2in}
    \begin{minipage}[t]{0.4\linewidth}
        \centering
        \includegraphics[width=\textwidth]{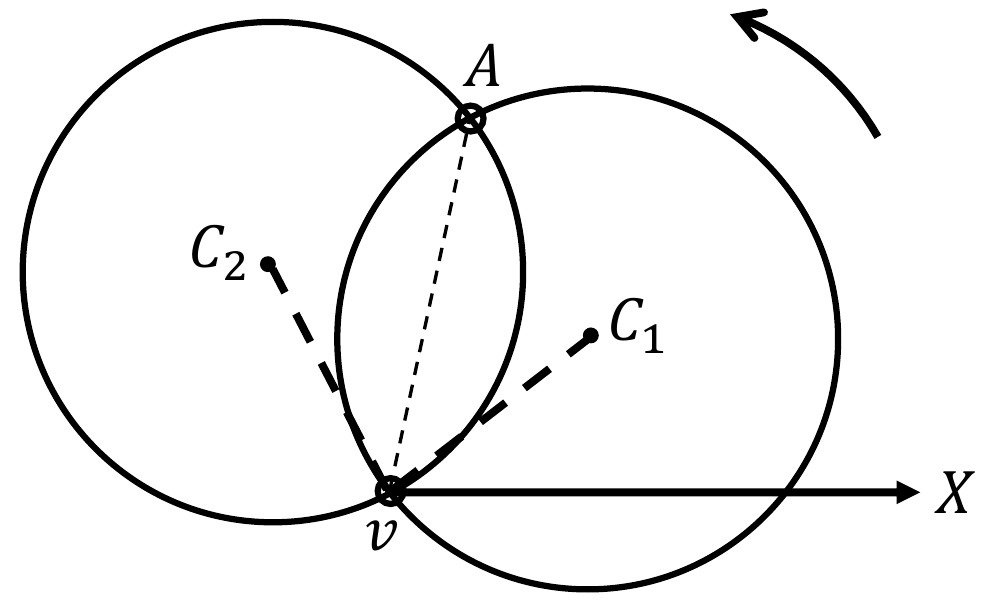}
        \caption{\small{two events for $A$.}}\label{fig:rotate-illustrate}
    \end{minipage}
        \hspace{0.2in}
    \begin{minipage}[t]{0.4\linewidth}
        \centering
    \includegraphics[width=\textwidth]{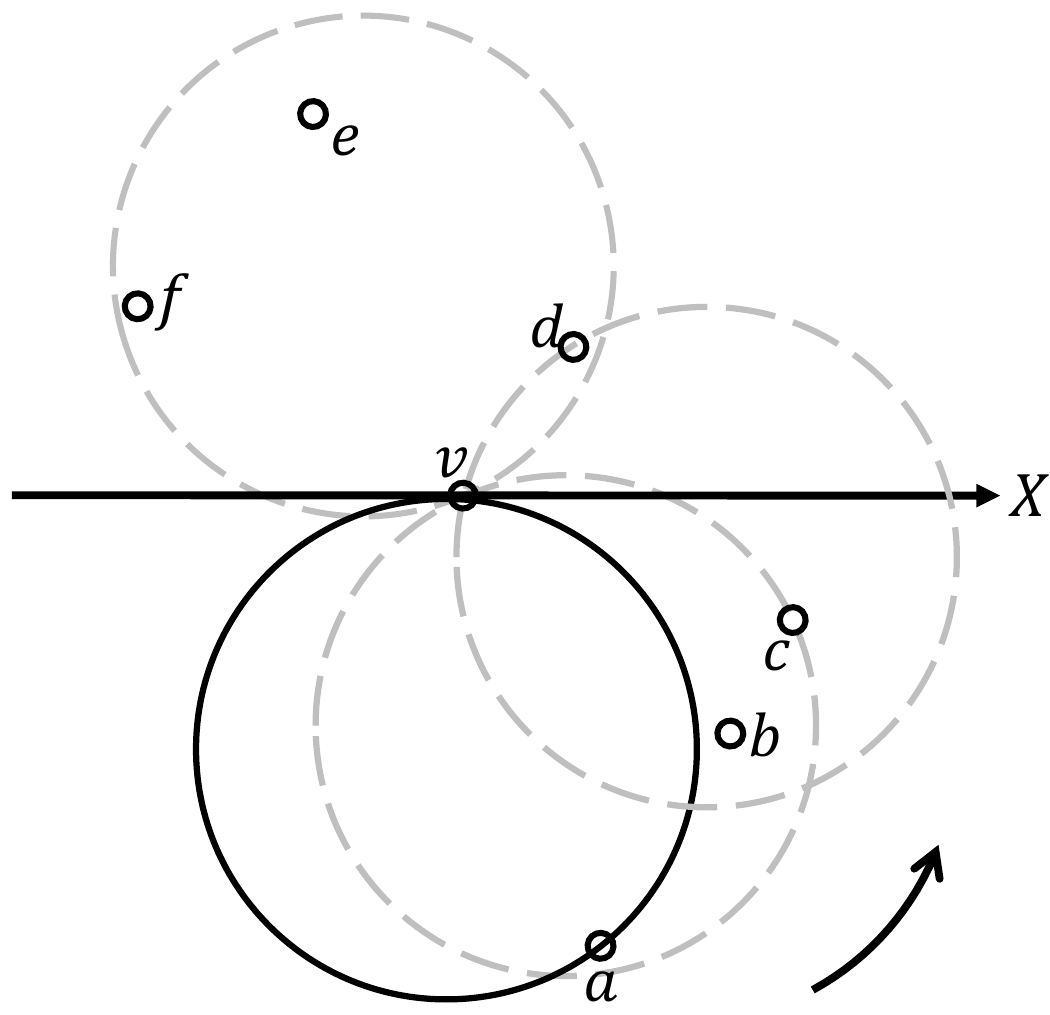}
    \caption{\small{Angular Sweep}}
    \label{fig:angular}
    \end{minipage}
\end{figure}

In Figure~\ref{fig:rotate-illustrate}, the  circles centered at $C_1$ and $C_2$ are  $\mathcal{C}_v$ at the two special events for point $A$. Denote polar coordinates of $C_1$ and $C_2$ as $(r, \theta.start)$ and $(r, \theta.end)$ respectively, and polar coordinate of $A$ as $(d_A,\alpha_A)$, then we calculate $\theta.start$ and  $\theta.end$ using equations:
\begin{align}
\vspace{-1em}
\label{eq:theta1}\theta.start &=  \alpha_A - \cos^{-1} \frac{d_A}{2r}\\
\label{eq:theta2}\theta.end &=  \alpha_A + \cos^{-1} \frac{d_A}{2r} 
\end{align}
\noindent Given a reference point $v$ and a set of vertexes $V'\subset V$ where each vertex is within $d$ from $v$, Algorithm~\ref{algo:local_maximal_set} outputs all $v$-Local Spatial Clusters.
Lines 1 - 5 first calculate start and end angles for nodes in $V'$ via  Eq.~\eqref{eq:theta1} and Eq.~\eqref{eq:theta2} and sort nodes  based on start angles. Lines 6 - 19 present the angular sweep procedure (Fig.~\ref{fig:angular}).

Let the initial state of $C_v(r, \theta)$ (shown as the black circle in Fig.~\ref{fig:angular}) be at the place where it just passes the first node ($\theta = a.start$ ) and let the candidate set $CS = \{a\}$ which records the set of points currently enclosed by $C_v$.  Let $end$ keep track of the smallest end angle of points in $CS$. Keep rotating $C_v$ counter-clockwisely to the next points and adding new  points  to $CS$ until one  point in $CS$  will leave $C_v$.  More specifically,
denote the next point that $C_v$ is going to reach as $x$, when $x.start > end$ which means that at least a point is going to leave $C_v$,  add $CS$ to LSC set. Rotate $C_v$ to reach $x$, add $x$ to $CS$ and remove points whose end angles are less than $x.start$ to form a new candidate set. Keep the above procedure until reaching the last point.  For example, in Fig.~\ref{fig:angular}, add $b, c$ to $CS$ step by step and then when $C_v$ is going to  enclose $d$, since $d.start > end =  a.end$, the current set $CS = \{a, b, c\}$ should be a LSC. Then remove point $a$ from $CS$ because it has left $C_v$ and add $d$ to $CS$. Keep rotating and generating LSC until the circle encloses the last point $f$. There are three LSCs detected as the grey dashed circles enclose.
\begin{algorithm}
\footnotesize
\caption{\emph{Local\_Spatial\_Clusters}}\label{algo:local_maximal_set}
\KwIn{Reference node $v$, a vertex set $V'$}
\KwOut{A set of $v$-LSC: $LSC$}
\tcc{use $v$ as reference point and $x$ axis as direction to build polar coordinate system}
Interval list $P_v \gets [\ ]$\;
\For{$(x_u, y_u)$ in $V'$}{
		calculate  $\Theta_u.start$, $\Theta_u.end$\;
		$\Theta_u.node = u$, $P_v.add(\Theta_u)$\;
		}
		
$I \gets P_v.sort(key=\Theta.start)$\;
$LSC \gets \{ \}$, $CS = \{I_1\}$\;
$end \gets I_1.end$, $idx \gets 2$\;
\While{$idx \leq I.length$}{
	\If{$I_{idx}.start \leq end$}{
	$CS.add(I_{idx})$\;
	$end \gets \min(end, I_{idx}.end)$\;
	}
	\Else{
	\For{$\Theta$ in $CS$}{
		\If{$\Theta.end < I_{idx}.start$}{
		$CS.remove(\Theta.node)$\;
		}
	}
	$CS.add(I_{idx})$\;
	$LSC.add(CS)$\;
	$end \gets \min(\Theta_{i}.end)$ for $\Theta_{i}\in CS$\;
	}
	$idx \gets idx+1$\;
}
\lIf{$CS$ not empty}{$LSC.add(CS)$}
\textbf{return} $LSC$\;
\end{algorithm}
\setlength{\textfloatsep}{0pt}


 \textbf{Complexity Analysis.} 
Suppose that the input vertex set $V'$ has a size $m$, then
Line 5 takes time $O(m\log m)$ by using a conventional sorting algorithm. For the angular sweep  shown in lines 6-19, the update of candidate set $CS$ (lines 13 to  15), which dominates the loop body, is executed in $O(m)$ time. Thus, the total worst case time complexity of Algorithm~\ref{algo:local_maximal_set} should be $O(m\log m + m^2)=O(m^2)$. For any vertex $v$, the number of $v$-LSCs is $O(m)$.

\subsection{Searching GSC}\label{subsec:find_GSC}
An LSC may not be a GSC as it might be a subset of another LSC with a different reference node. Thus, by excluding  any LSC which is a subset of another LSC, we obtain all GSCs.

The whole algorithm to find GSCs  is presented in Algorithm \ref{algo:global_maximal_set}. Note that for a certain social constraint, e.g., $k$-core, $k$-truss, some simple pruning can be implemented to reduce search space. Algorithm \ref{algo:global_maximal_set} uses $k$-core as an illustration. In the experiments, we implement both $k$-core and $k$-truss.
 For each node $v$, to reduce search space, line 3 applies range query to find out vertexes within distance $d$ from the location of $v$ since any vertex outside this circle  can not be in a $v$-LSC. Since we need to find $k$-core at last, if the number  of vertexes lie in the circle is less than $k$, these LSCs can not contain any MCC and we skip them as line 4 shows.
Line 5 invokes  Algorithm~\ref{algo:local_maximal_set} \emph{LocalMaximalSet} to find out all $v$-LSCs. After detecting all LSCs, the function \emph{FindGSC} is invoked to add  LSCs which are not subset of any others to   the GSCs set $GSC$.
\begin{algorithm}
\footnotesize
\caption{\emph{Global\_Spatial\_Clusters}}\label{algo:global_maximal_set}
\KwIn{A set of nodes $V$ of a GeoSN,  distance threshold $d$, social constraint $k$}
\KwOut{A set of all Global Spatial Clusters (GSCs) $GSC$}
\SetKwFunction{FindGSC}{FindGSC}
\SetKwFunction{LocalSpatialClusters}{LocalSpatialClusters}
\SetKwProg{Fn}{Function}{:}{}
$LSC = \{ \}$\;
\For{$v$ in $V$} {
	\xiuwen{\tcc{do a range query to find all nodes within $d$ distance to reference node.}
	$CV \gets$ range\_query(v, d)\;
	\lIf{$CV.size() < k$}{continue}}
	\xiuwen{\tcc{detect all $v$-LSC}}
	$\mathcal{L}_v\gets$ \LocalSpatialClusters{v, $CV/\{v\}$}\;
    $LSC$.add($\mathcal{L}_v$)\;
	}
	$GSC \gets$ \FindGSC{$LSC$,  $k$}\;
\textbf{return} $GSC$\;

 \Fn{\FindGSC{$LSC$,  $k$}}{
 $LSC.sort(key = LSC.length(), reverse = true)$\;
 $GSC = \{ \}$\;
 \For{$lsc$ in $LSC$}{
 \lIf{$lsc.size() < k$}{continue}
 $GSC.add(lsc)$ if no set in $GSC$ contains $lsc$}
 \textbf{return} $GSC$\;
 }
\end{algorithm}
\setlength{\textfloatsep}{0pt}

\textbf{Complexity Analysis.} Assume that there are $n$ vertexes in GeoSN, i.e., $|V| = n$,
in the worst scenario, for each vertex $v\in V$, there are  $O(n)$ vertexes within distance $d$ to $v$,  and thus the worst time complexity for finding  $v$-LSCs  (line 5) would be $O(n^2)$ as analyzed in last subsection. Thus, finding all LSCs would cost $O(n^3)$. There are $O(n^2)$ LSCs  in total, thus  function  \emph{FindGSC} will do $O(n^4)$ set comparisons  where each single comparison takes time $O(n)$.  The total time complexity in the worst case should be $O(n^3 + n^5) = O(n^5)$.
However, in practice, the spatial threshold $d$ is a small value. Assume that the location density of points is $\rho$, and let $C_x = \rho \pi (x/2)^2$,  then it takes time $O(C_{2d}^2)$ to get $v$-LSCs,  the number of $v$-LSCs would be $O(C_{2d})$ and each set has points $O(C_d)$, so 
the time complexity would be $O(nC_{2d}^2 + n^2C_{2d}^2C_d) = O(n^2C_{2d}^2C_d)$.

\section{Pruning and Optimization}\label{sec:pruning}
 The high time complexity of Algorithm~\ref{algo:global_maximal_set} 
in last section
prevents it being scaled to large dataset. Thus, we propose several pruning strategies and optimization tricks for Algorithm~\ref{algo:global_maximal_set}, which is experimentally demonstrated to accelerate the algorithm a lot and reduce time by orders of magnitude in some datasets.

In Algorithm \ref{algo:global_maximal_set}, in the worst case an LSC needs to be compared with  other $O(n^2)$ LSCs to determine whether or not it is a GSC, which is extremely inefficient and is  the dominant part of the time complexity. In this section, we develop pruning rules to dramatically reduce the times of set comparisons.

\textbf{Pruning rule I: point-wise pruning.}
 Given an $a$-LSC and a $b$-LSC ($a$ is a different point from $b$), a trivial observation is that if $ed(a, b) > d$, one of them can never be a superset of the other and there is no need to perform element-wise set comparison.

 However, in many situations, even though $ed(a, b)$ is smaller than $d$, it is very likely that an $a$-LSC can never cover a $b$-LSC, as Fig.~\ref{fig:prune} (b) shows.  The following will seek a stronger pruning rule in the granularity of LSCs  so that we only need to check  elements of two LSCs  when necessary.

 Assume that there is  a set of points $S$ and there exists a circle $C$ with radius $r$  covering all points in $S$, and we now consider the problem  to decide the location of  $C$. Denote the circle center of $C$ as $C_o$,  for any point $s\in S$, we have $ed(s, C_o) \leq r$. We draw  a circle with radius $r$ centered at each point in $S$, then $C_o$ must lie in the intersection of these circles.  We relax these circles with their minimum bounding rectangles, and  $C_o$ must lie in the intersection area of these rectangles. The intersection  rectangle is trivial to  compute: instead of considering all points in $S$, we only need four values: $x_{max}$, $x_{min}$, $y_{max}$ and $y_{min}$,  which are the maximal and  minimal $x$ coordinates and  $y$ coordinates of points in $S$ respectively. As  Fig.~\ref{fig:prune} (a) shows, there are three points filled with grey  that decide the intersection rectangle.  The dashed rectangle centered at the uppermost or rightmost point decide the  bottom side or left side of intersection rectangle respectively, while the one  centered at the  leftmost and also bottom-most point   decide the right and upper  side of intersection. The rectangle is thus calculated by $CenterRec = \{(x, y)|x\in [x_{max} - r, x_{min} + r], y\in [y_{max} - r, y_{min} + r]\}$. 

\begin{figure}[htbp]
\centering
\includegraphics[width=0.4\textwidth]{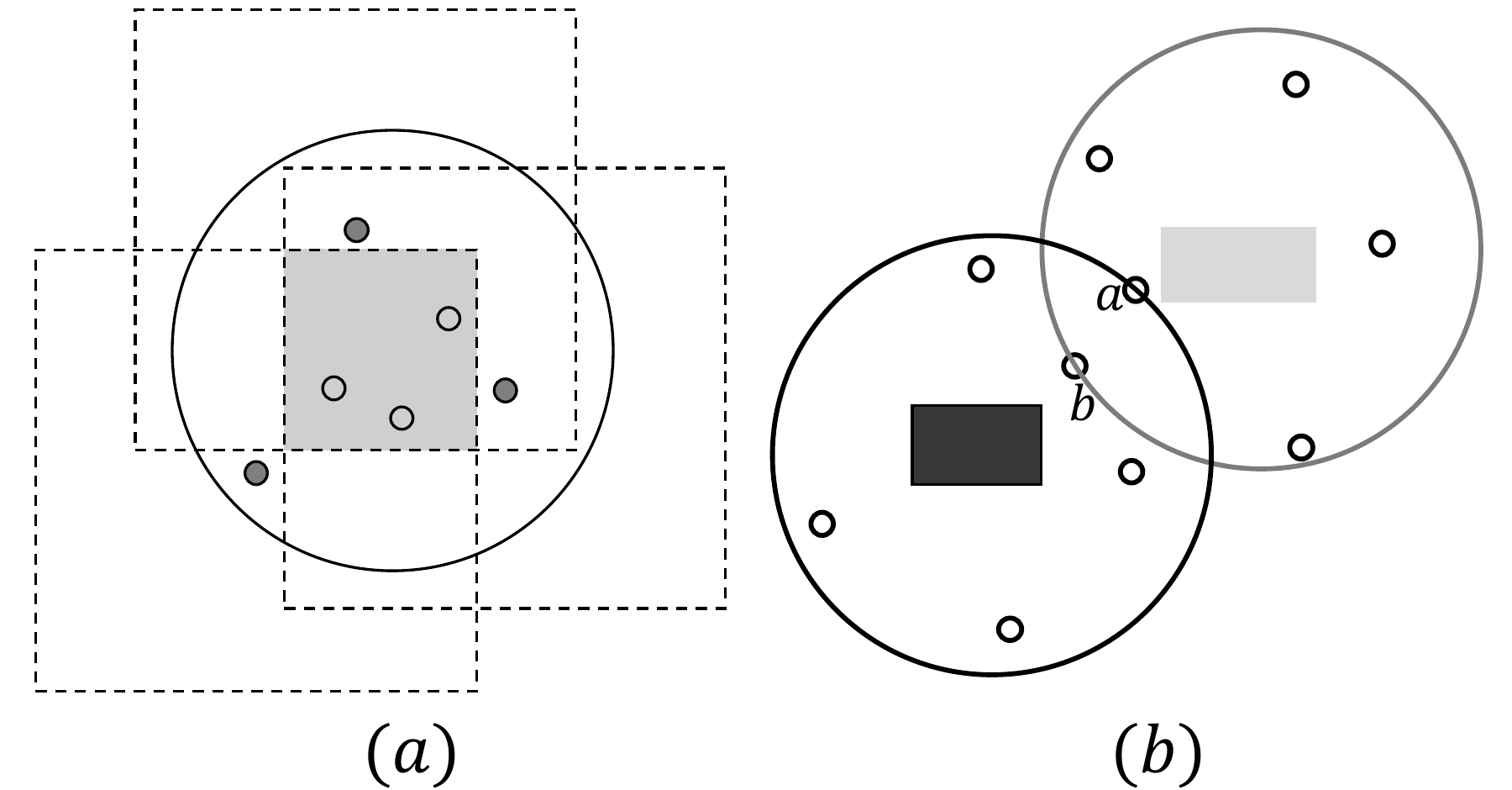}
\setlength{\belowcaptionskip}{-3pt}
\caption{Illustration of pruning rules.}\label{fig:prune}
\end{figure}

For two LSCs with different reference nodes, we consider the necessary condition for a set to cover another. As  Fig.~\ref{fig:prune} (b) show, there are  two bounded  circles with threshold $d$ as diameter, shown as black and grey large circles,  covering an $a$-LSC and $b$-LSC  respectively. For each LSC, we  calculate the rectangle $CenterRec$ as the black and grey shadows show respectively. Since their $CenterRec$s do not intersect with each other, it is not likely to find  a circle with diameter $d$ to cover all points in these two sets, thus 
neither of the two LSCs can cover the other. The following is a stricter pruning rule,

\textbf{Pruning rule II: LSC-wise pruning.}
 Given an LSC $S$,  we only need to do set comparison between $S$ and each of those LSCs whose $CenterRec$ intersect with that of $S$.

\textbf{Implementation.}
By applying these two pruning  rules, we re-implement the function \emph{FindGSC} in  Algorithm~\ref{algo:global_maximal_set}, called \emph{FindGSCPrune}. As Algorithm~\ref{algo:prune1} shows,  for any point $v$, the point-wise pruning rule is first applied. Nearby candidate points $points$ is found by using a range query, and then a set of all  LSCs with reference node in $points$ are gathered for comparison ($cprSets$ in Algorithm~\ref{algo:prune1}). To further reduce set comparisons, for each $v$-LSC $s$, set-wise pruning rule is applied so that we only compare $s$ with  sets in $cprSets$ each of which has a $CenterRec$ intersecting with $s$'s. 


\begin{algorithm}
\footnotesize
\caption{\emph{Find\_GSC\_With\_PruneRules}}\label{algo:prune1}
\KwIn{Map<Point $v$, a set of $v$-LSCs> where each $v$-LSC has $CenterRec$}
\KwOut{A set of all Global Spatial Clusters $GSC$}
\SetKwFunction{FindGSCPrune}{FindGSCPrune}
\SetKwProg{Fn}{Function}{:}{}
 \Fn{\FindGSCPrune {$LSCMap$}}{
 $GSC = \{ \}$\;
 \For{$v$ in $LSCMap.keys()$}{
 \tcc{apply pruning rule 1}
 $points \gets$  range\_query(v, d)\;
 $cprSets\gets$ the set of all $v'$-LSCs for $v'\in points$\;
 \For{$s\in LSCMap.get(v)$}{
    \For{$s'\in cprSets$}{
    \tcc{apply pruning rule 2}
    \lIf{$s.CenterRec$ intersect with $s'.CenterRec$}{compare $s$ and $s'$}
    }
 }
 }
 \textbf{return} $GSC$\;
 }
\end{algorithm}
\setlength{\textfloatsep}{0pt}


\section{Approximate Spatial Algorithm} \label{sec:approx}

In last section, we propose powerful pruning rules, though it works in practice, it would still be desirable to pursue a more scalable algorithm for large scale GeoSN. In addition, in the exact algorithm,  only after all LSCs are detected can we decide if an LSC is global. However, in many scenarios, users would expect to get GSCs in a more interactive way, i.e., we should return some GSCs before all LSCs are detected. In this section, we will show that if we loose the spatial constraint, then a much more efficient and interactive algorithm with constant approximation ratio ($\sqrt{2}$) can be designed. 

\subsection{The Basic Intuition}
In Fig.~\ref{fig:ratio}, assume that the small black points consists of a Global Spatial Cluster, then based on the definition, there is a circle, shown as the large black circle,  with diameter $d$ which is the spatial distance threshold to cover this cluster. We relax this circle by its minimum bounding rectangle, shown as the black rectangle  in the figure, and this rectangle must cover all points in that cluster. Similar to the definition of GSC, we give that of Global Approximate Spatial Cluster (GASC) based on rectangles.
\begin{figure}[t]
    \centering
    \includegraphics[width=0.2\textwidth]{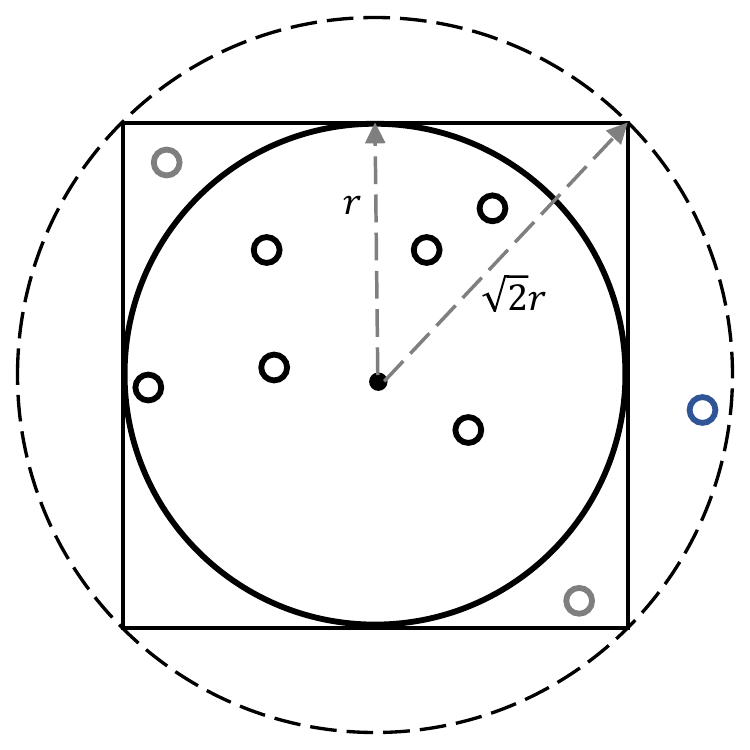}
    \caption{Illustration for theorem~\ref{lemma: sandwich}.}
    \label{fig:ratio}
\end{figure}

\begin{definition}[Global Approximate Spatial Cluster] \label{def:globale_maximal_square}
Given a set of points $U\subset V$  and a distance threshold $d$, $U$ is a Global Approximate Spatial Cluster if,
\begin{itemize}
    \item there exists a rectangle $R$ with side length $d$ covering $U$;
    \item there does not exist a rectangle with side length $d$ covering a set of points $U'$  such that  $U\subsetneq U'$.
\end{itemize}
The rectangle $R$ covering $U$ is called a global maximal square.
\end{definition}
 We give a theoretical  bound for using GASCs to replace GSCs,

\begin{theorem}[Sandwich Theorem]\label{lemma: sandwich}
For a distance threshold $d$, denote the set of all GSCs as $\mathcal{U}_d$ and  the set of all GASCs  as $\mathcal{A}_d$. Then we have the following theorem,
\begin{itemize}
\item For any set $U \in \mathcal{U}_d$, $\exists A\in \mathcal{A}_d$ such that $U\subset A$.
\item For any set $A \in \mathcal{A}_d$, $\exists U' \in \mathcal{U}_{\sqrt{2}d}$ such that $A \subset U'$.
\end{itemize}
\end{theorem}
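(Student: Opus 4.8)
The plan is to prove both inclusions purely from two elementary nesting facts between the two covering shapes, combined with a finiteness-based maximality extension. The facts are: (i) a circle of diameter $d$ is contained in an axis-aligned square of side $d$, namely the circle's bounding box, which extends $d/2$ in each axis direction from the center; and (ii) an axis-aligned square of side $d$ is contained in the circle of diameter $\sqrt{2}d$ concentric with it, because the square's diagonal has length $\sqrt{2}d$, so every corner, and hence every point of the square, lies within distance $\tfrac{\sqrt 2}{2}d$ of the center. Each bullet of the theorem then reduces to the same scheme: the covering shape witnessing one kind of cluster fits inside the covering shape of the other kind, so the \emph{same} point set is covered by the larger shape; one then enlarges to a maximal covered set.

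For the first bullet, let $U\in\mathcal{U}_d$ be covered by a circle $C$ of diameter $d$. By fact (i), the axis-aligned bounding square $R$ of $C$ has side $d$ and contains $C$, hence covers every point of $U$. Thus $U$ belongs to the family $\mathcal{F}=\{S : U\subseteq S\subseteq V,\ S\text{ is covered by some side-}d\text{ square}\}$. Since $V$ is finite, $\mathcal{F}$ is a nonempty finite family ordered by inclusion and therefore has a maximal element $A$. This $A$ is in fact a GASC: if some side-$d$ square covered a set $A'\supsetneq A$, then $U\subseteq A\subseteq A'$ would put $A'\in\mathcal{F}$, contradicting the maximality of $A$ in $\mathcal{F}$. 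Hence $A\in\mathcal{A}_d$ and $U\subseteq A$.

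For the second bullet, let $A\in\mathcal{A}_d$ be covered by a side-$d$ square $R$. By fact (ii), the circle $C'$ concentric with $R$ and of diameter $\sqrt{2}d$ contains $R$, hence covers $A$. So $A$ lies in the family of subsets of $V$ containing $A$ and coverable by a diameter-$\sqrt{2}d$ circle; by the same finiteness argument this family has a maximal element $U'$, which, by the identical contradiction argument, is a genuine GSC for threshold $\sqrt{2}d$, i.e.\ $U'\in\mathcal{U}_{\sqrt{2}d}$, with $A\subseteq U'$.

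The geometry itself is routine, so the only points needing care are bookkeeping rather than deep. First, I would make explicit that the squares in Definition~\ref{def:globale_maximal_square} are axis-aligned (consistent with the $CenterRec$ discussion), and verify that fact (i) indeed produces an axis-aligned square while fact (ii) is insensitive to orientation, so both directions respect the definition. Second, I would note that the symbol $\subset$ in the statement should be read as $\subseteq$, since it can happen that $U$ is already maximal under squares (giving $U=A$), in which case proper containment fails and only $U\subseteq A$ is guaranteed by the maximal-extension step. The main conceptual content is simply that $\sqrt{2}$ is precisely the ratio between the circumscribing and inscribing circles of a square, which is exactly why this square--circle relaxation yields a $\sqrt{2}$ approximation and no tighter constant.
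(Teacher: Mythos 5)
Your proof is correct and follows essentially the same route as the paper: the circle of diameter $d$ is enclosed in its bounding square of side $d$, and a side-$d$ square is enclosed in its circumscribing circle of diameter $\sqrt{2}d$. The paper's own proof is just these two geometric facts stated with a figure (calling the first bullet ``trivial''); your only addition is making explicit the finiteness-based extension to a maximal cluster and the $\subset$-versus-$\subseteq$ reading, both of which the paper leaves implicit.
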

\begin{proof}
Fig.~\ref{fig:ratio} illustrates this lemma. The first property is trivial. For the second property, let the black rectangle denote a global maximal square covering an GASC, then its  minimum bounding circle, denoted as the black dashed circle, with radius $\sqrt{2}d/2$ must cover this GASC.   
\end{proof}

Based on this theorem, the problem of detecting all GSCs can be  approximated by finding all GASCs with approximation ratio $\sqrt{2}$. 
Similar to $v$-bounded circle and $v$-LSC, we give  the definitions of square with $x$-bounded left side (with shorthand as $x$-bounded square) and $x$-Local Approximate Spatial Cluster ($x$-LASC) as follows,
\begin{definition}[Square with $x$-Bounded Left Side] Given a square with side length $d$, it is a square with $x$-bounded left side if the left side of this square passes  node $x$.
\end{definition}
\begin{definition}[Local Approximate  Spatial Cluster] Given an $x$-bounded square $R_x$ and a set of nodes $U_x$ covered by $R_x$, $U_x$ is a $x$-Local Approximate Spatial  Cluster ($x$-LASC) if and only if $U_x\neq \emptyset$ and there does not exist a set of nodes $U'_x\supsetneq U_x$ covered by another $x$-bounded square. Denote the set of all $x$-LASC for a fix $x$ as $\mathcal{U}_x$.
\end{definition}

 
Similar to Lemma~\ref{lemma:correctness}, we have the following 
lemma showing the relationship between GASCs and LASCs.
\begin{lemma}\label{lemma:correctness2}
Given a set of points $V$ in $\mathbb{R}^2$ and a distance threshold $d$, denote the set of all Global Approximate Spatial Clusters as $\mathcal{U}$. It always holds that $\mathcal{U}\subset \cup_{x\in V}\mathcal{U}_x$.
\end{lemma}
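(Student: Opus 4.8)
The plan is to adapt the argument of Lemma~\ref{lemma:correctness} to the axis-aligned setting, where the geometry is actually simpler because we may \emph{translate} the covering square rather than having to re-center a circle via the triangle inequality. Fix an arbitrary GASC $A\in\mathcal{U}$. By Definition~\ref{def:globale_maximal_square} there is a square $R$ with side length $d$ covering $A$; write $R=[\ell,\ell+d]\times[b,b+d]$. Let $x^\ast\in A$ be a point of minimum $x$-coordinate, say with coordinate $x_{\min}$, so that $\ell\le x_{\min}$ since $R$ covers $A$.

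First I would translate $R$ horizontally to the right until its left side passes through $x^\ast$, obtaining $R'=[x_{\min},x_{\min}+d]\times[b,b+d]$. Because every point of $A$ has $x$-coordinate in $[x_{\min},\ell+d]$, and $\ell\le x_{\min}$ gives $\ell+d\le x_{\min}+d$, all these $x$-coordinates lie in $[x_{\min},x_{\min}+d]$; the $y$-coordinates are untouched and already lie in $[b,b+d]$. Hence $R'$ still covers all of $A$, and by construction its left side passes through $x^\ast$, so $R'$ is an $x^\ast$-bounded square and $A$ is a nonempty set covered by it.

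It remains to promote $A$ to a full $x^\ast$-LASC, i.e.\ to verify maximality within the family of $x^\ast$-bounded squares. The key observation is that every $x^\ast$-bounded square is in particular a side-$d$ square, so the sets coverable by $x^\ast$-bounded squares form a subfamily of the sets coverable by arbitrary side-$d$ squares. Since $A$ is a GASC, there is no side-$d$ square covering a proper superset of $A$; \emph{a fortiori} there is no $x^\ast$-bounded square covering a proper superset of $A$. Together with $A\neq\emptyset$ and the covering $R'$ from the previous step, this is exactly the definition of an $x^\ast$-LASC, so $A\in\mathcal{U}_{x^\ast}\subseteq\bigcup_{x\in V}\mathcal{U}_x$. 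As $A$ was arbitrary, $\mathcal{U}\subseteq\bigcup_{x\in V}\mathcal{U}_x$.

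The only place requiring care --- the main obstacle, though it is a mild one --- is the translation step: one must check that sliding the square to anchor its left side on the extreme point does not uncover any point, which is precisely where the axis-aligned structure pays off, since no triangle-inequality slack is needed as in the circular case of Lemma~\ref{lemma:correctness}. A secondary point worth stating explicitly is the maximality transfer, namely that GASC-maximality over all side-$d$ squares is strictly stronger than, and therefore implies, LASC-maximality over the restricted class of $x^\ast$-bounded squares.
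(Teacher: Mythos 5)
Your proof is correct and is exactly the argument the paper intends: the paper states Lemma~\ref{lemma:correctness2} \emph{without} an explicit proof (only remarking it is ``similar to Lemma~\ref{lemma:correctness}''), and your translation of the covering square so that its left side is anchored at the leftmost point of the GASC, followed by the \emph{a fortiori} maximality transfer (GASC-maximality over all side-$d$ squares implies maximality over the restricted class of $x^\ast$-bounded squares), is precisely the axis-aligned analogue of the paper's circle argument for Lemma~\ref{lemma:correctness}. Your observation that pure translation suffices --- no triangle-inequality re-centering is needed as in the circular case --- is accurate and makes this case strictly simpler.
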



Based on Lemma~\ref{lemma:correctness2}, the problem of finding all GASCs can be transformed to finding LASCs as candidates and then generating GASCs from the candidate set.

\subsection{Algorithm}
Algorithm~\ref{algo:global_maximal_square} presents the whole procedure to detect all GASCs interactively by a single scan of all nodes.  Line 2 first sorts points by $x$-coordinates. 
For each point, it generates all LASCs and calls function CheckGlobal to check if each LASC is a global ASC. The following explains the detail of these two procedures. 
\begin{figure}[t]
    \centering
    \includegraphics[width=0.3\textwidth]{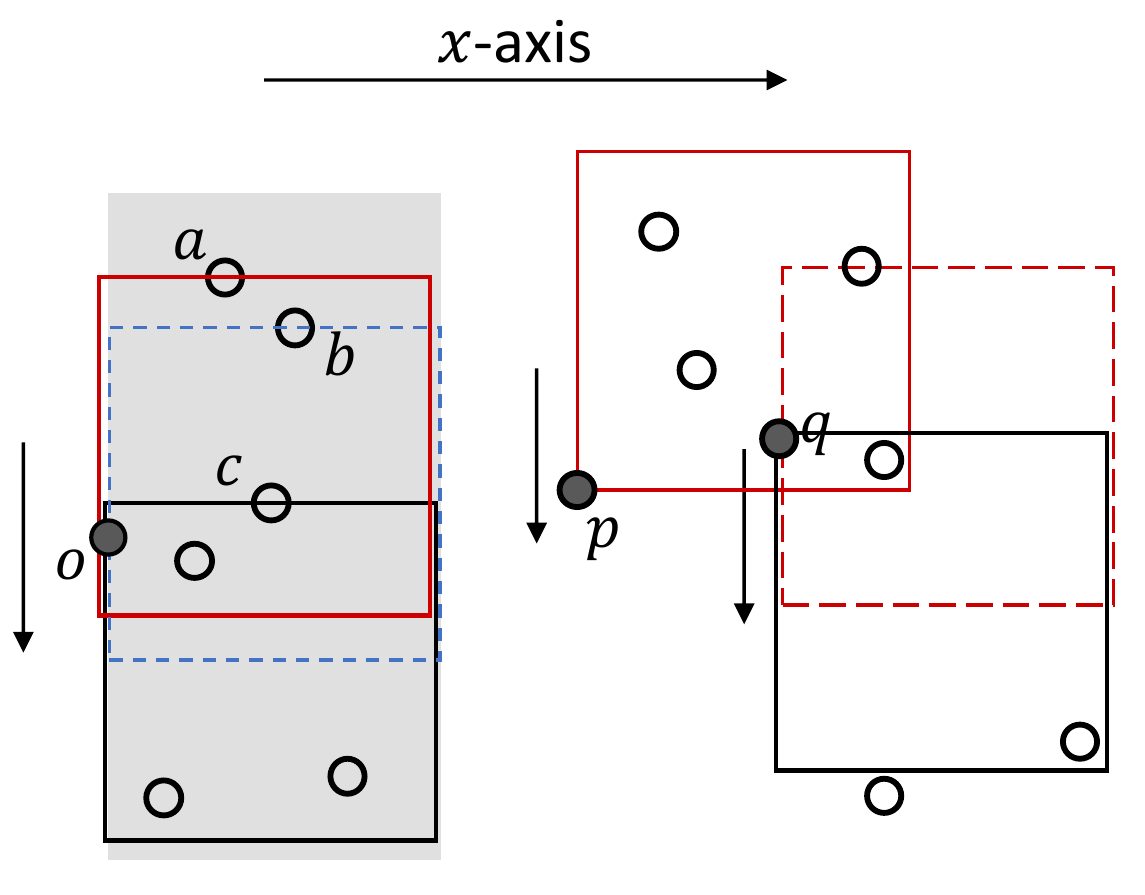}
    \caption{Illustration of Algorithm~\ref{algo:global_maximal_square}}
    \label{fig:GSCquare}
\end{figure}

\textbf{Detecting LASCs (lines 4-13).} Fig.~\ref{fig:GSCquare} illustrates the steps of finding  LASC.  For point  $O$ with coordinate $(O.x, O.y)$, the possible points that a  $O$-bounded  square can  cover is in the rectangle $\{(x, y)|x\in [O.x, O.x+d], y\in[O.y-d, O.y+d]\}$ as the grey shadow  rectangle  shows. Then we generate all $O$-LASCs by moving a $d\times d$ rectangle downwards in the shadow. The points in the grey shadow are sorted by $y$ coordinates, and then  let the upper side of an $O$-bounded square pass the  points  one by one. Let $start$ keeps tracking of the first point that has not been covered and $CS$ denote currently covered. Initially, let the upper side of rectangle passes the first point $a$ (line 6). Get all points covered by this rectangle (lines 12 - 13) and this should be a  LASC since when the rectangle moves downwards, it can not cover $a$ anymore. Move it downwards so that its upper side passes the next point,  and there can be  three possible situations: if   the last point  in the grey shadow has been  enclosed by the previous rectangle, stop moving and terminate (line 9); if the new rectangle does not enclose any new node, then ignore it (line 10); if the new rectangle encloses more points than the previous one, then get all nodes covered by this square and it is a LASC (lines 11-13). For example, in Fig.~\ref{fig:GSCquare}, when the rectangle moves to pass the second point $b$, no new points is covered in the blue dashed rectangle and thus it is skipped,  while when it passes $c$, new points are included and all  points enclosed forms a LASC. Since the last point has been covered, it terminates. 

\textbf{Finding GASCs (lines 19-25).}  Once an LASC is found, it is easy to check if this is a GASC. For example, in Fig.~\ref{fig:GSCquare}, there is a $q$-LASC covered by the  red dashed rectangle. To check if this is a GASC, we  only need to compare it with $p$-LASCs where $p.x< q.x$ which have already been detected,  since any $q'$-LASC where $q'.x > q.x$ can not contain point $q$. Only at most three points in the dashed rectangle  needed to be considered. The three points are: point with maximum $x$ coordinate and points with minimum and maximum $y$ coordinates. If these three points  are already in a previous LASC,  then all points in the rectangle are in it, thus this LASC will be discarded.  Otherwise, this is an GASC.
Function CheckGlobal of Algorithm~\ref{algo:global_maximal_square} shows this  process where $NodeGASC$ records for each point a set of  all GASCs currently found  that enclose it. By determining if the three GASCs sets for these three special points have intersection, we can check if the LASC is a global one.   
\setlength{\textfloatsep}{0pt}
\begin{algorithm}
\footnotesize
\caption{\emph{Find\_Approximate\_Spatial\_Clusters}}\label{algo:global_maximal_square}
\KwIn{A set of nodes $V$ of geo-social network $G$, distance threshold $d$, social constraint $k$}
\KwOut{A set of all GASCs $GASC$}
\SetKwFunction{CheckGlobal}{CheckGlobal}
\SetKwProg{Fn}{Function}{:}{}
$GASC \gets \{ \}$, $NodeGASC \gets Map< >$, $GASCLabel \gets 0$ \;
Sort $V$ in the increasing order of $x$ coordinate, and denote the sorted list as $C= \{p_1, p_2, \cdots, p_N\}$  where $p_i$ has $p_i.x$ and $p_i.y$\;
\For{$p_i$ in $C$} {
     \tcc{do a range query}
     $L \gets \{ p_j\in V| p_j.x\in[p_i.x, p_i.x+d], p_j.y\in[p_j.y - d, p_j.y + d] \}$\;
     \tcc{sorted in the increasing order of $p_j.y$}
    $L\gets L.sort(key = p.y)$\;
	$start \gets 1$; $CS \gets  \{start\}$\;

	\For{$j$ from $1$ to $L.$length}{
	    $CS.pop(0)$\;
	    \lIf{$start > L.length$}{
	    break}
	    \lElseIf{$p_{start}.y - p_j.y > d$}{
	    continue}
	    \Else{
	    \While {$start\leq L.length$ and  $p_{start}.y - p_j.y \leq d$}
	    {$CS.add(p_{start})$; $start \gets start+1$\;}
	    \If {\CheckGlobal{$NodeGASC$, $CS$} == True}{
	    $GASC.add(CS)$\;
	    \lFor{$p$ in $CS$}{$NodeGASC[p].add(GASCLabel)$}}
	    $GASCLabel ++$ ;
	    }
	}
}
\textbf{return} $GASC$\;
\Fn{\CheckGlobal{$NodeGASC $, $CS$}}{
$x, y \gets $ the first, last node in $CS$\;
\lIf{$(S \gets NodeGASC[x] \cap NodeGASC[y]) = \emptyset$}{\textbf{return} True}
\Else{
$z  \gets$ node with the maximal $x$ coordinate in $CS$\;
\lIf{$S\cap NodeGASC[z]) = \emptyset$}{\textbf{return} True}
\lElse{\textbf{return} False}
}
}
\end{algorithm}
\setlength{\textfloatsep}{0pt}

 \textbf{Complexity Analysis} The average number of points in a $d\times 2d$ rectangle is $C = 2\rho d^2$. For each point $v\in V$, lines 7 to 13 take time $O(C)$ to compute all $v$-LASCs. For function CheckGlobal,  the dominate step is set intersections. Suppose there are $O(D)$ GASCs that may contain a  point $x$, i.e., the size of $NodeGASC[x]$ is  $O(D)$, then  conducting a set intersection operation would take $O(D)$, then the total time complexity  is $O(n\log n + n\times (C \times D)) = O(n\log n + nCD)$. In the worst case, $D = O(C^2)$, however, since  $NodeGASC[x]$ records only GASCs currently found instead of all LASCs that contain $x$, $D$ is practically very small.

\section{Experimental Studies}\label{sec:exp}
Our experiments contain three  parts: we first test and compare the spatial algorithms which find out all spatial clusters, then test the whole MCC framework to get all maximal co-located communities, and finally we conduct case studies to compare our results with two state-of-the-art researches. All of our algorithms are implemented by Java using JDK 11 and tested on an Ubuntu server with Intel(R) Xeon(R) CPU X5675 @ 3.07GHz and 64 GB memory.

\begin{figure*}
\centering 
\subfigure[time vs. $N$ (Uniform)]{\label{fig:exp_s_uniform}
    \begin{minipage}[b]{0.192\textwidth}
        \centering 
        \includegraphics[width=\textwidth]{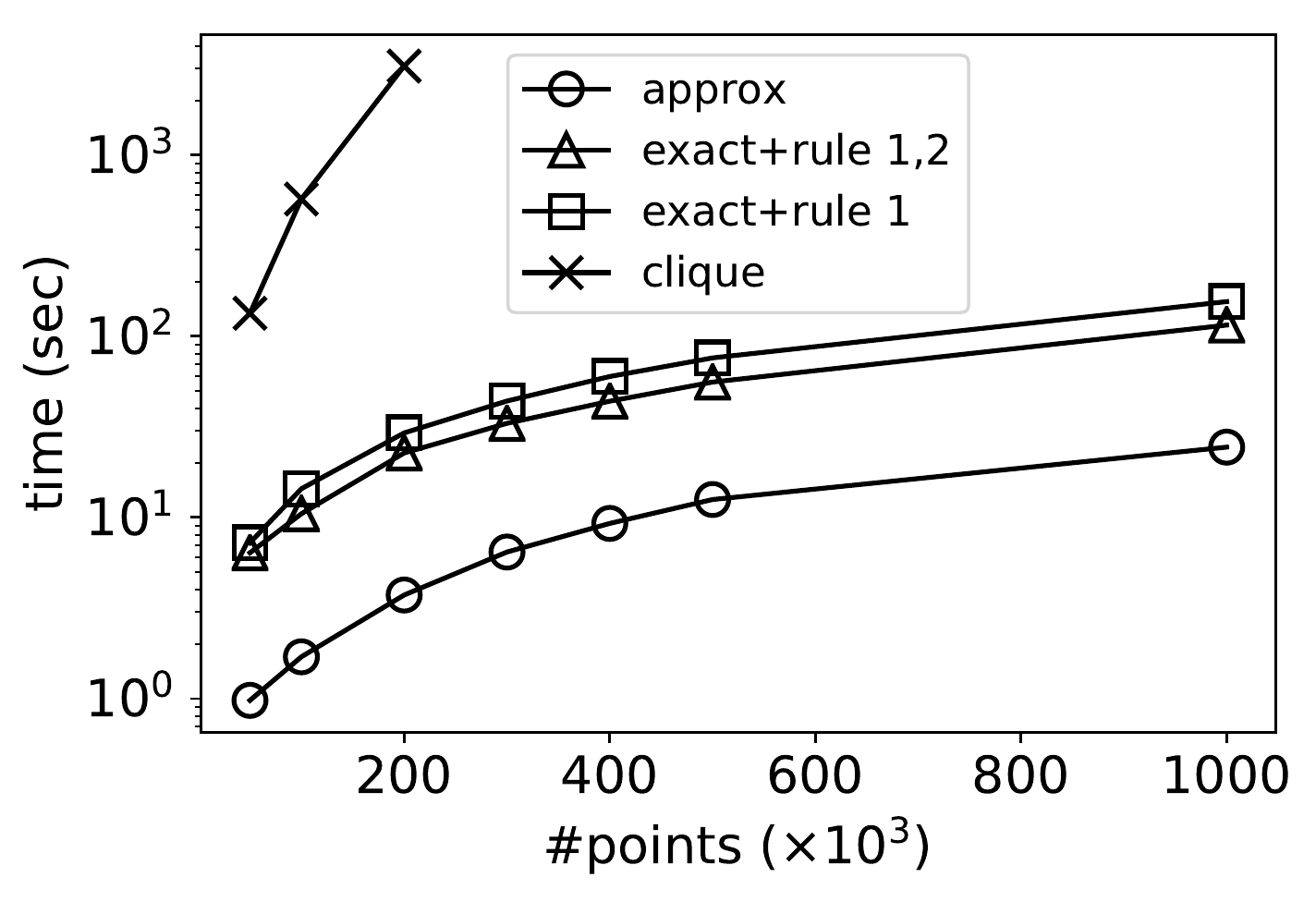}
    \end{minipage}}
\subfigure[time vs. $N$ (Gaussian)]{\label{fig:exp_s_gaussian}
    \begin{minipage}[b]{0.192\textwidth}
        \centering 
        \includegraphics[width=\textwidth]{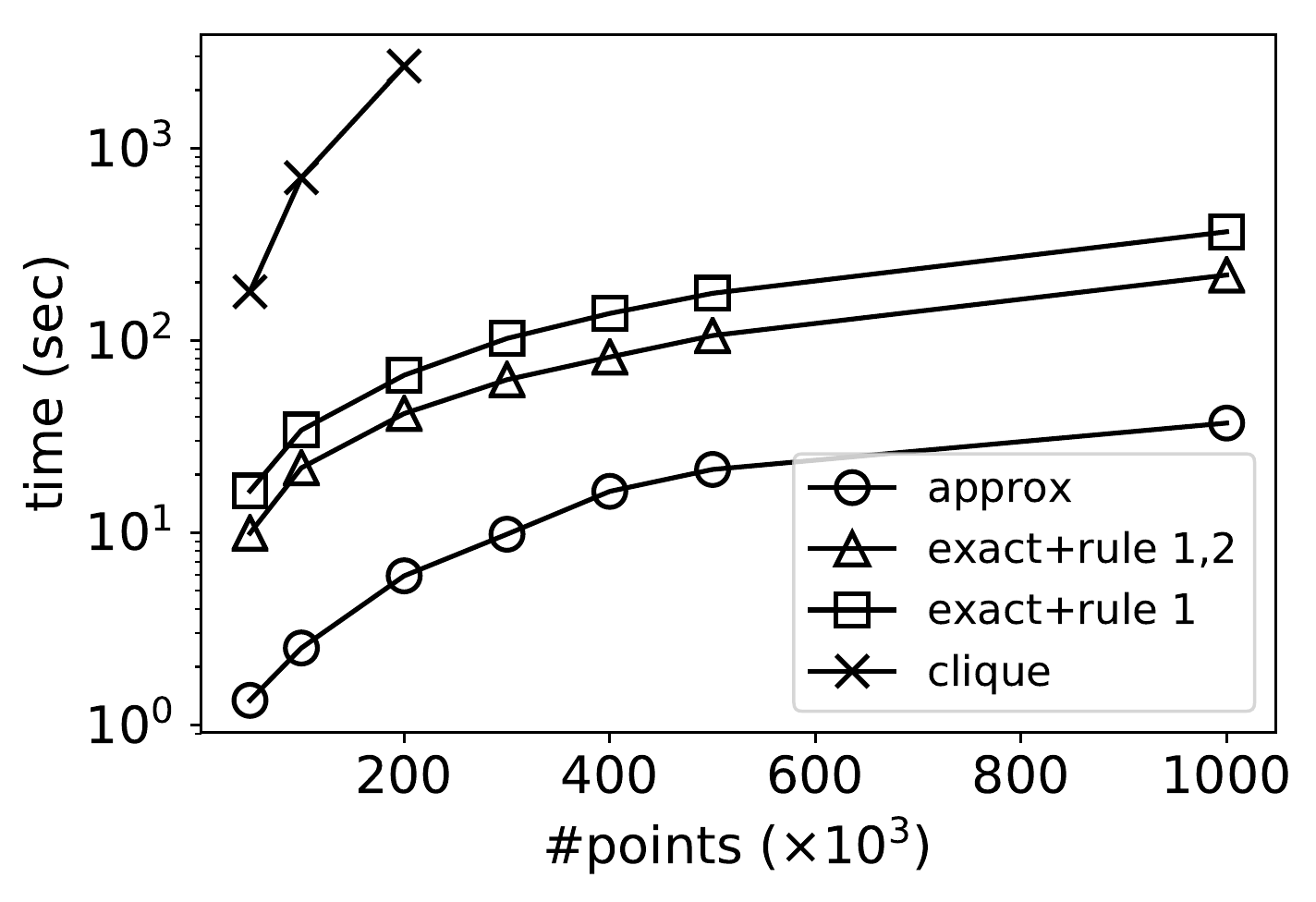}
    \end{minipage}}
\subfigure[time vs. $ratio$ (Brightkite)]{\label{fig:exp_s_brightkite}
    \begin{minipage}[b]{0.192\textwidth}
        \centering 
        \includegraphics[width=\textwidth]{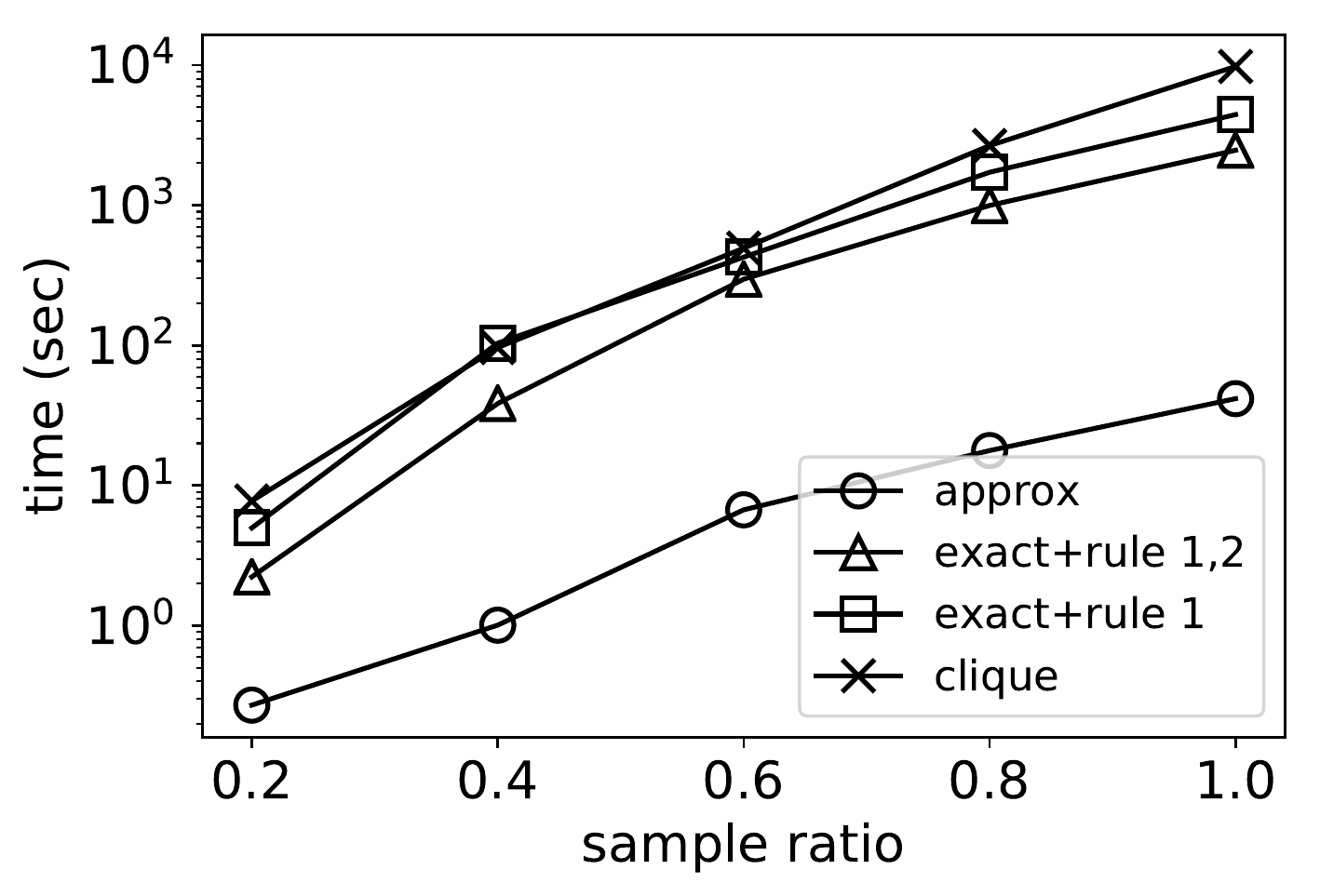}
    \end{minipage}}
\subfigure[time vs. $ratio$ (Gowalla)]{\label{fig:exp_s_gowalla}
    \begin{minipage}[b]{0.192\textwidth}
        \centering 
        \includegraphics[width=\textwidth]{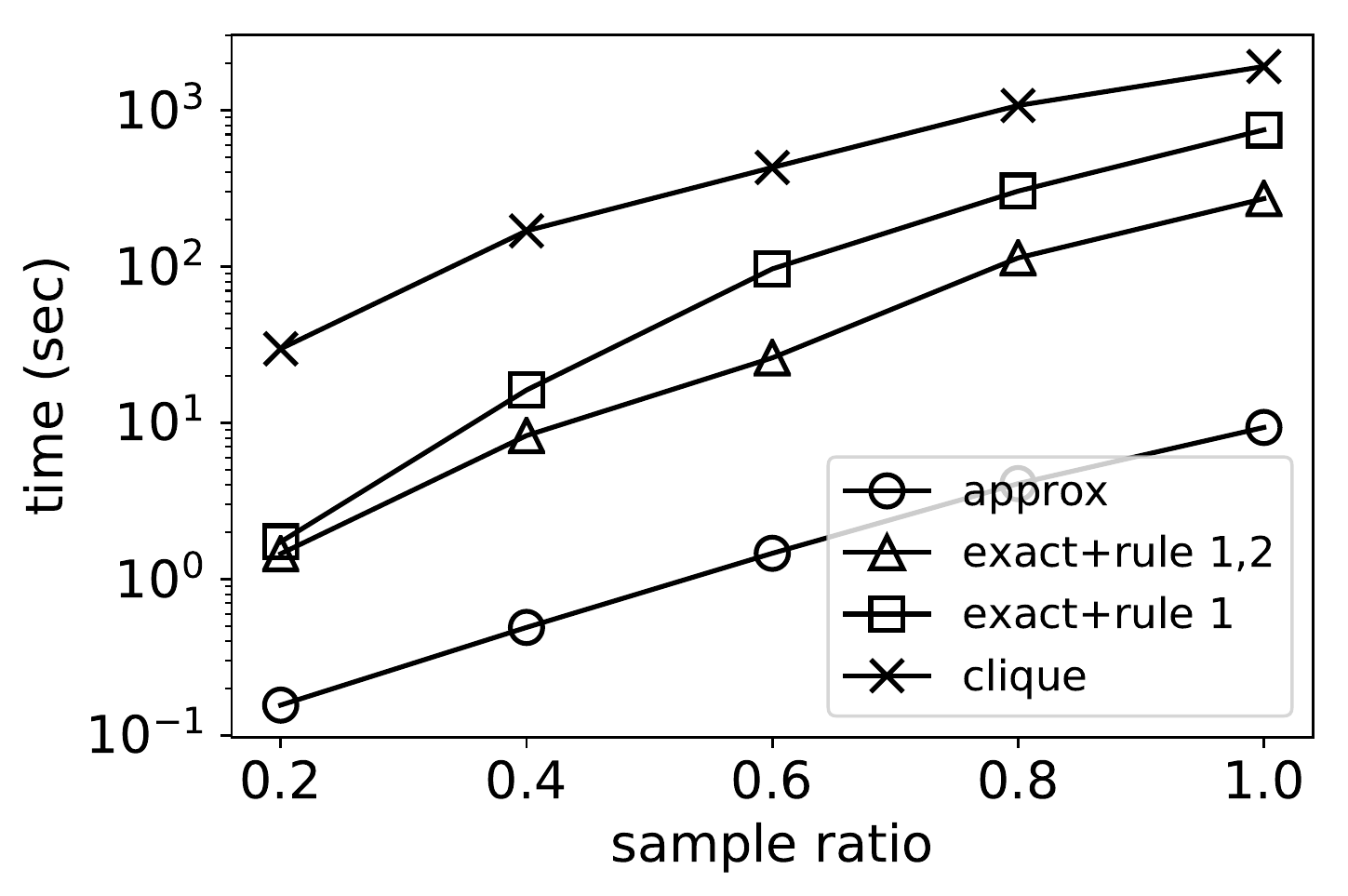}
    \end{minipage}}
\subfigure[time vs. $ratio$ (Weibo)]{\label{fig:exp_s_weibo}
    \begin{minipage}[b]{0.192\textwidth}
        \centering 
        \includegraphics[width=\textwidth]{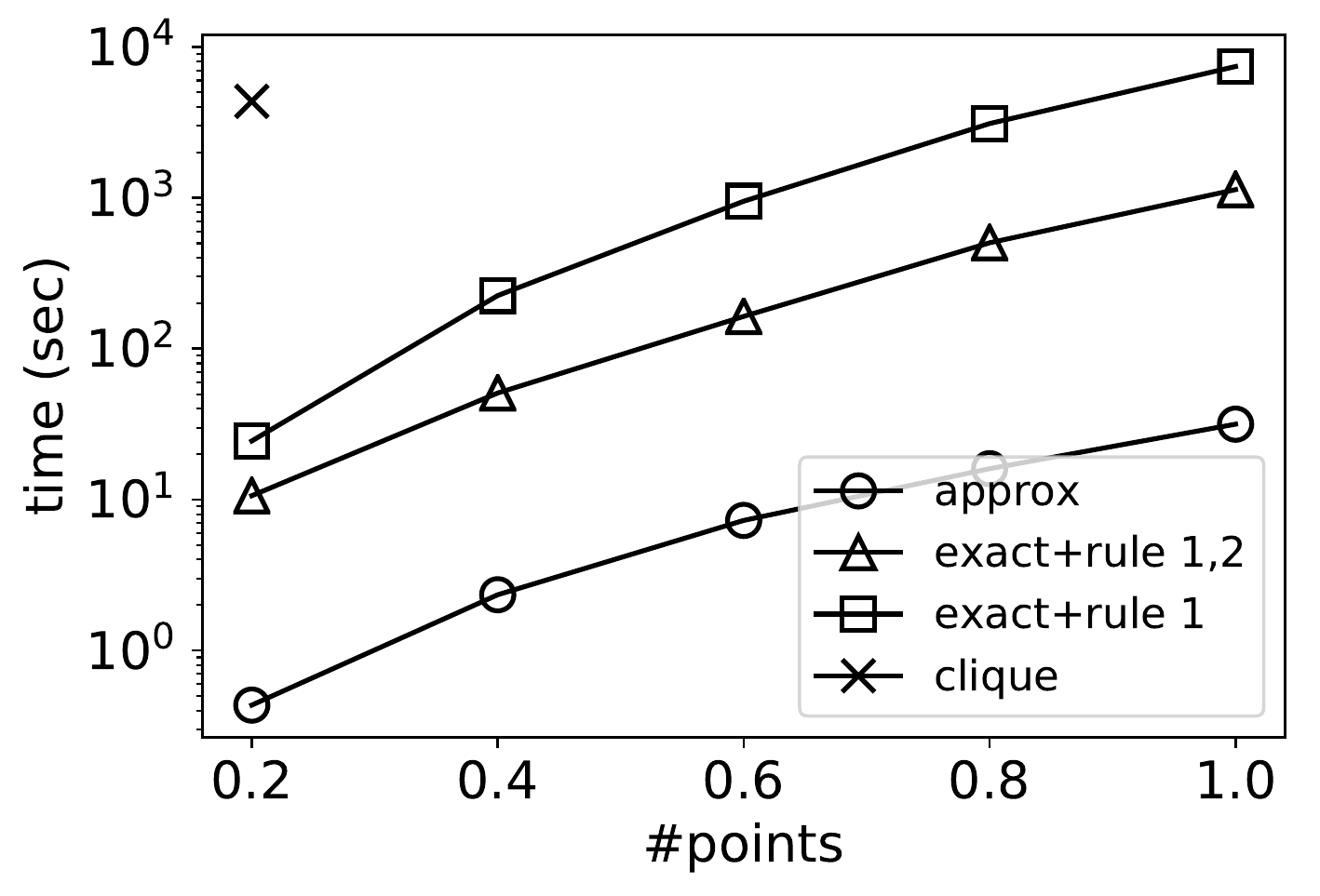}
    \end{minipage}}

\subfigure[time vs. $d$ (Uniform)]{\label{fig:exp_d_uniform}
    \begin{minipage}[b]{0.192\textwidth}
        \centering 
        \includegraphics[width=\textwidth]{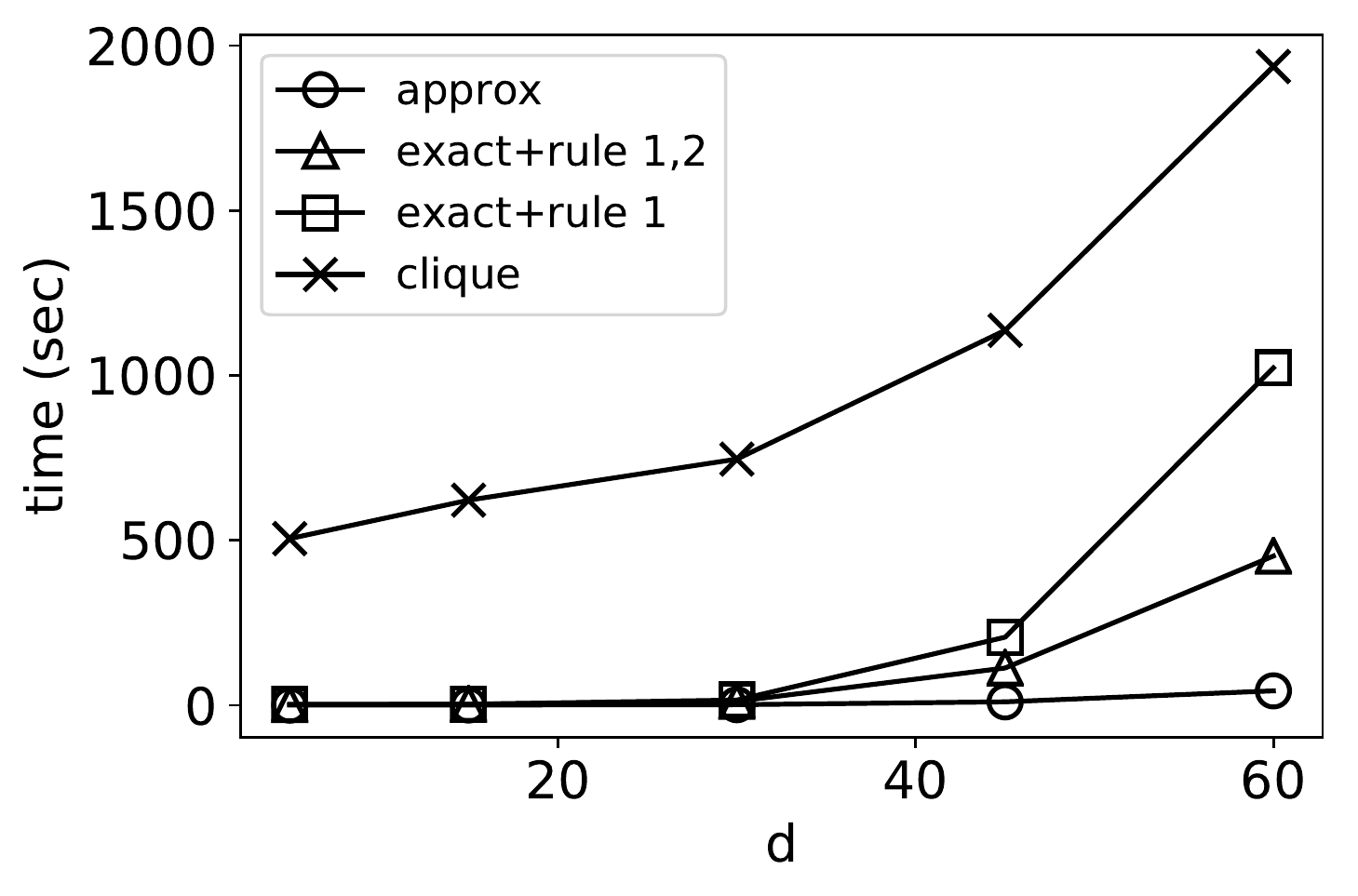}
    \end{minipage}}
\subfigure[time vs. $d$ (Gaussian)]{\label{fig:exp_d_gaussian}
    \begin{minipage}[b]{0.192\textwidth}
        \centering 
        \includegraphics[width=\textwidth]{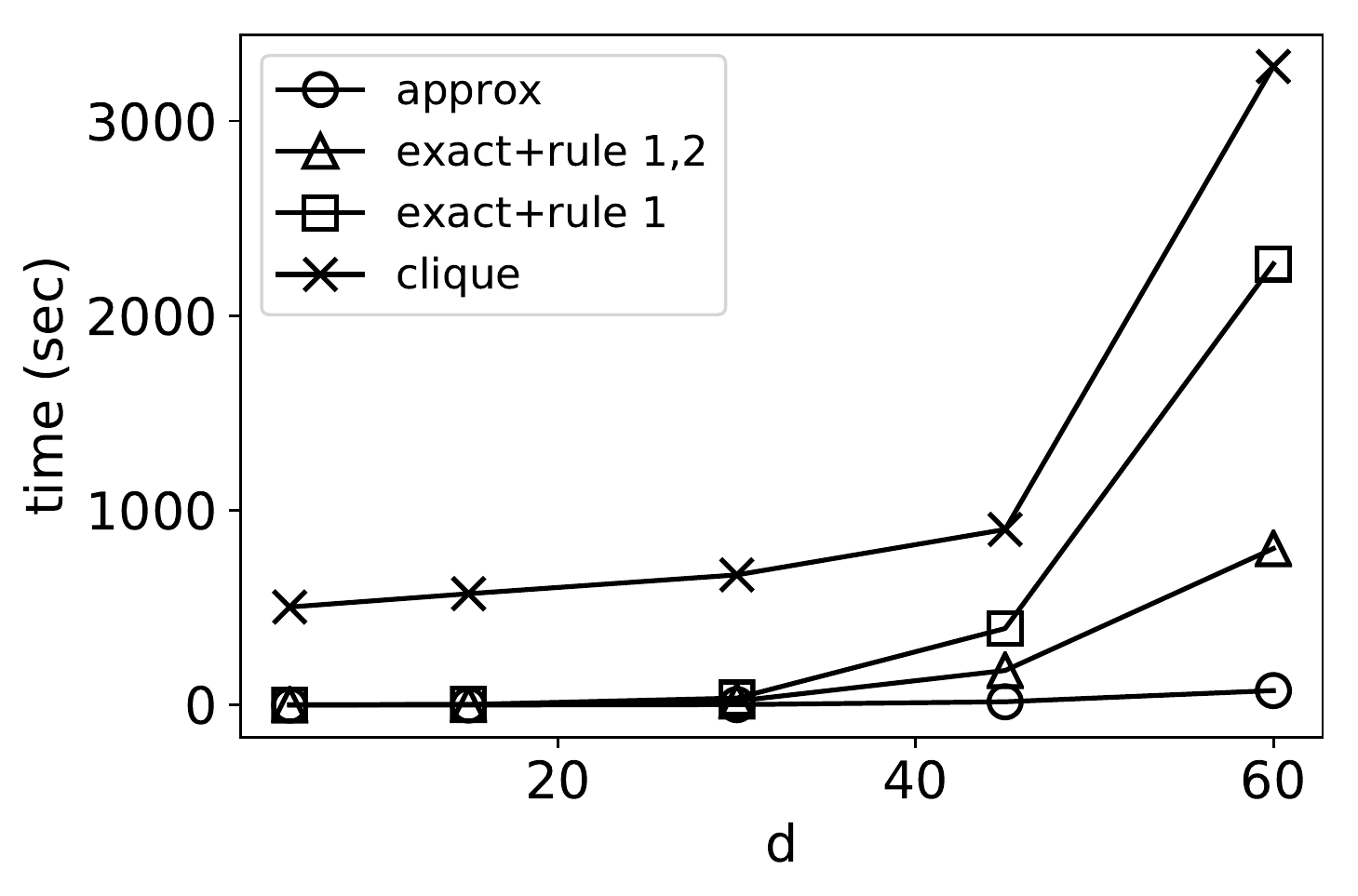}
    \end{minipage}}
\subfigure[time vs. $d$ (Brightkite)]{\label{fig:exp_d_brightkite}
    \begin{minipage}[b]{0.192\textwidth}
        \centering 
        \includegraphics[width=\textwidth]{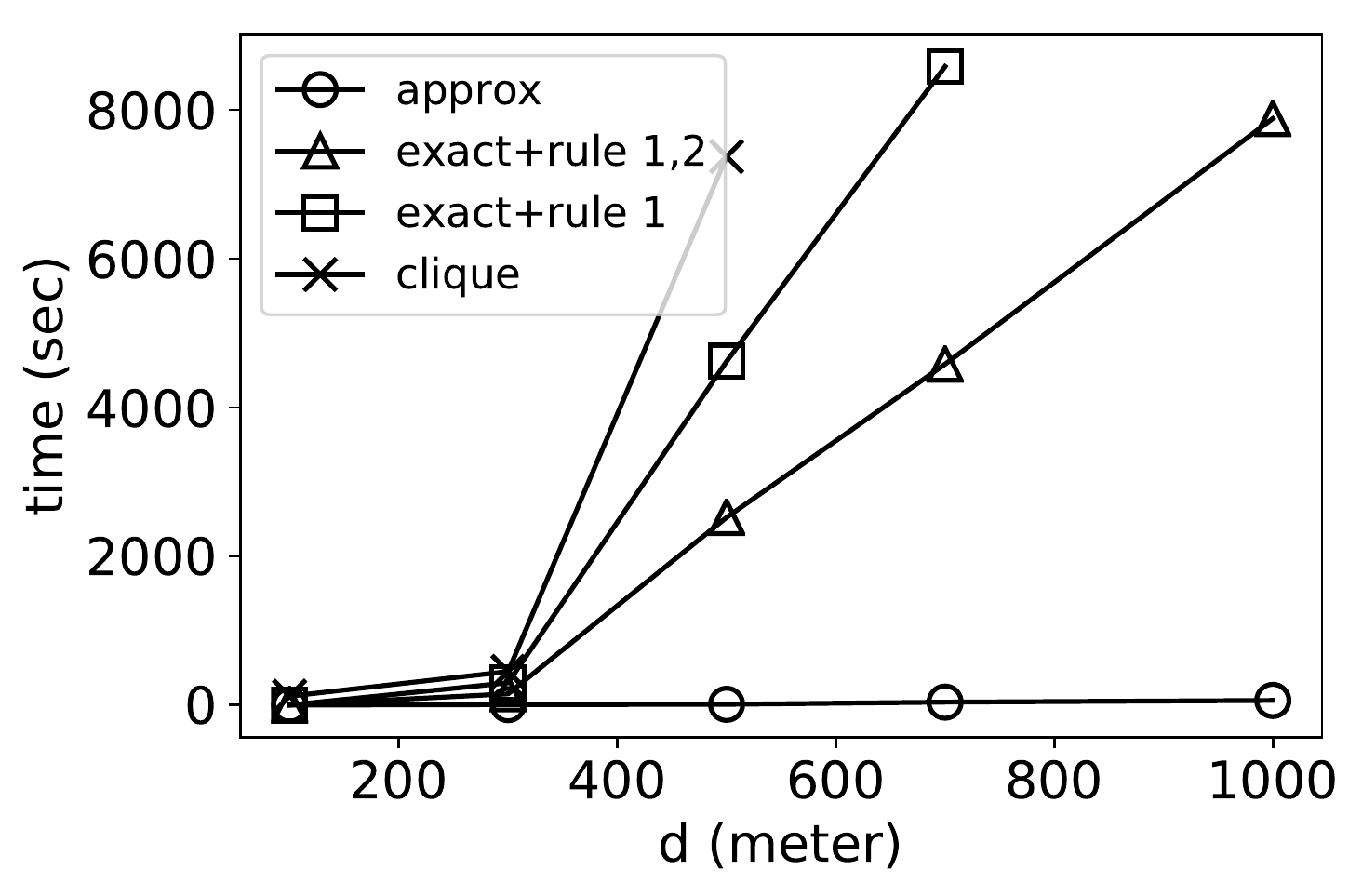}
    \end{minipage}}
\subfigure[time vs. $d$ (Gowalla)]{\label{fig:exp_d_gowalla}
    \begin{minipage}[b]{0.192\textwidth}
        \centering 
        \includegraphics[width=\textwidth]{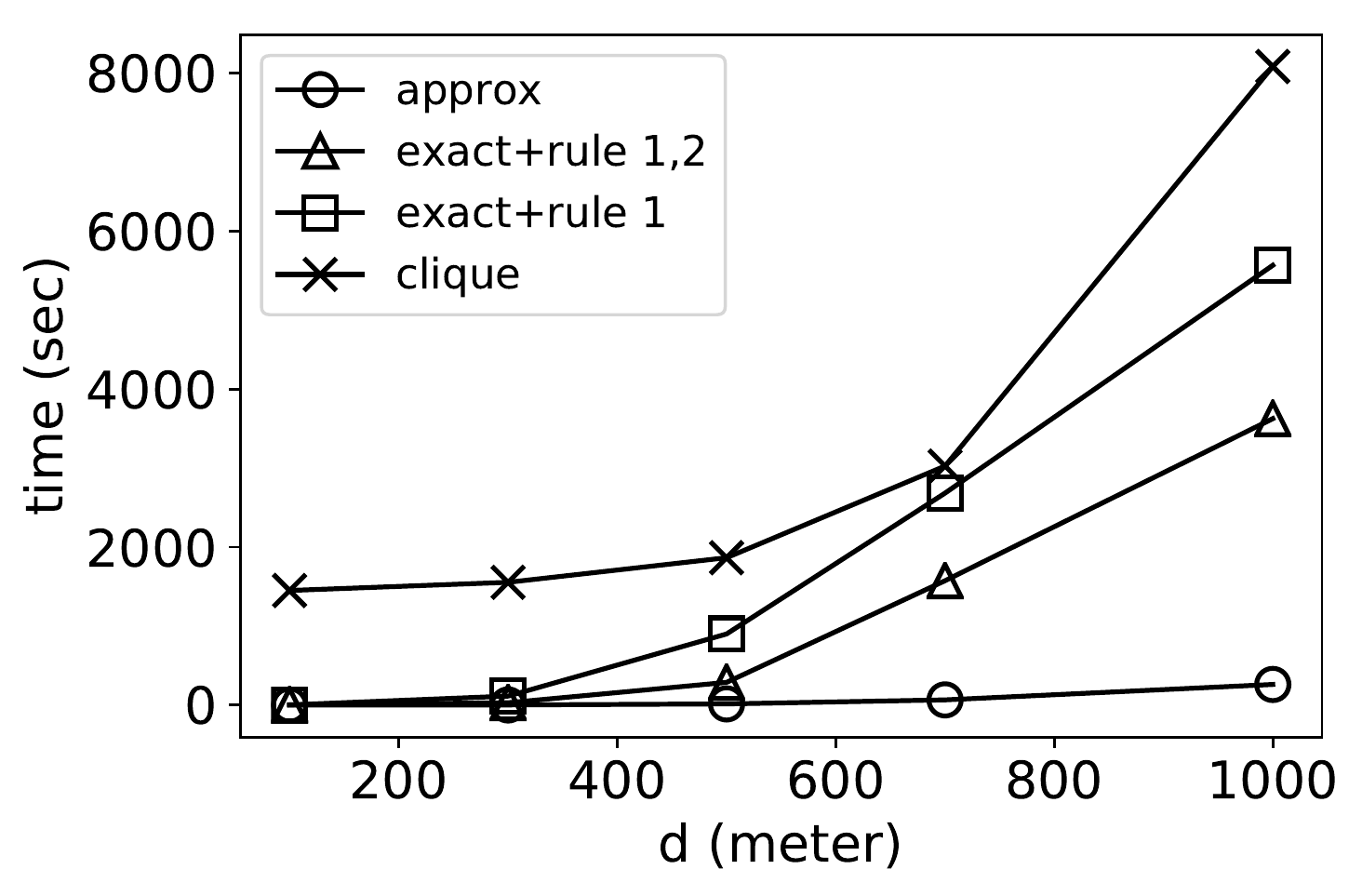}
    \end{minipage}}
\subfigure[time vs. $d$ (Weibo)]{\label{fig:exp_d_weibo}
    \begin{minipage}[b]{0.192\textwidth}
        \centering 
        \includegraphics[width=\textwidth]{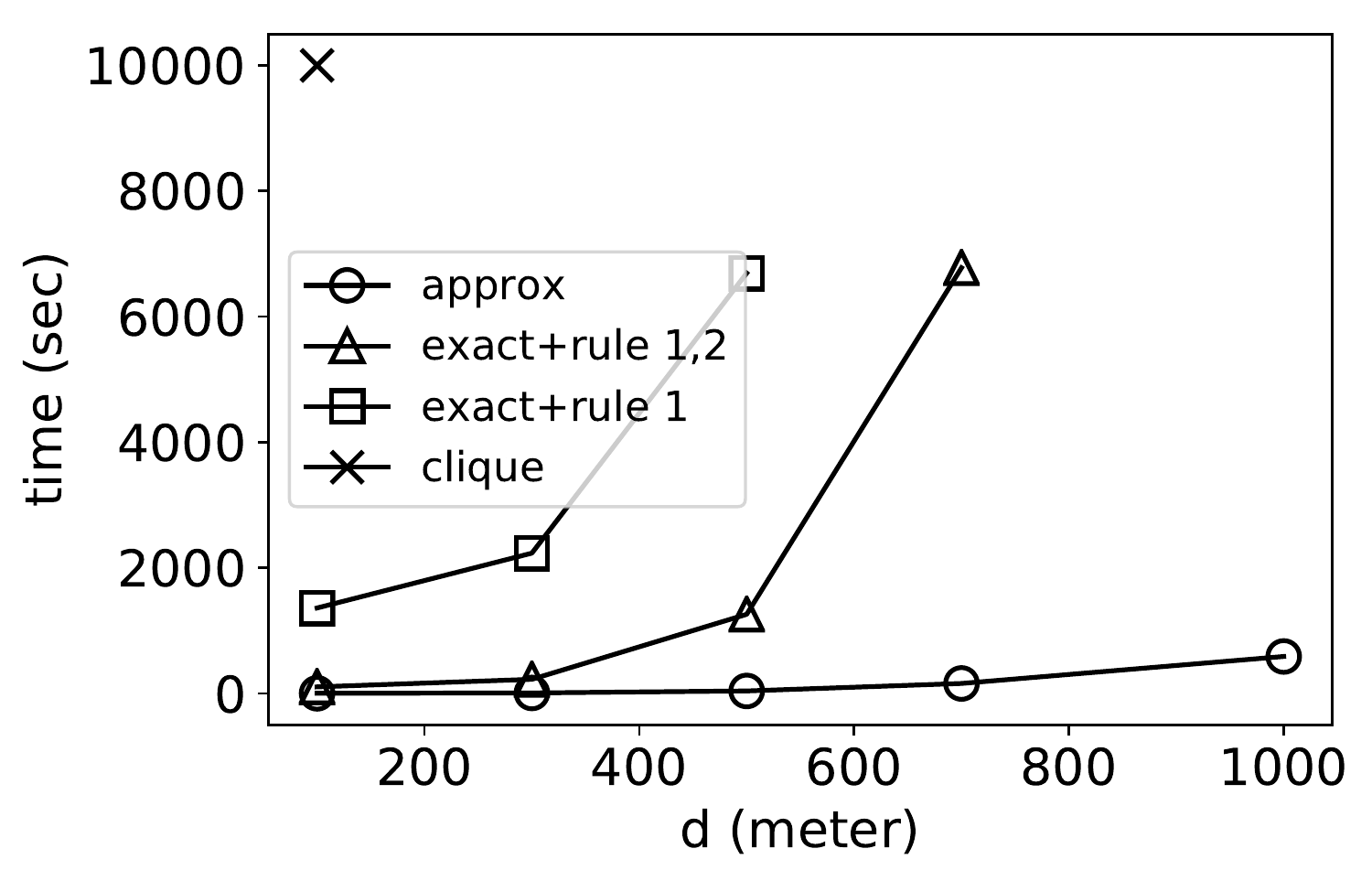}
    \end{minipage}}
\setlength{\belowcaptionskip}{-10pt}
\caption{Experiment results of the influence of scalability and $d$ on synthetic and real-world datasets.}\label{fig:exp_spatial_all}
\end{figure*}

\subsection{Spatial Algorithm Evaluation}
In this part, we test the performance of our spatial  algorithms and compare with the state-of-the-art algorithm from \cite{zhang2017engagement, chen2018maximum}.

\noindent \underline{\textbf{Algorithms}} We test four algorithms as shown in Table.~\ref{tab:algo}. The clique-based algorithm from \cite{chen2018maximum} does not solve exactly the same problem as ours, however, it can be easily adapted to detect all spatial clusters  by enumerating all cliques of a spatial graph.
\begin{table}[htbp]\caption{List of implemented algorithms.}\label{tab:algo}

\footnotesize
\begin{tabular}{|c|c|}
\hline
\textbf{Name} & \textbf{Algorithm}  \\ \hline\hline
$\mathsf{clique}$ & adapt clique-based algorithm \cite{chen2018maximum} to get all spatial clusters           \\ \hline
$\mathsf{exact + rule1}$         & algorithm~\ref{algo:global_maximal_set} with pruning rule 1      \\ \hline
$\mathsf{exact + rule 1, 2}$ & algorithm~\ref{algo:global_maximal_set} with pruning rule 1 and 2 
\\ \hline
$\mathsf{approx}$ & approximation algorithm  Algorithm~\ref{algo:global_maximal_square}
\\ \hline
\end{tabular}
\end{table}

\noindent \underline{\textbf{Dataset}} The experiments are conducted on three real-world datasets and two synthetic datasets. Table.~\ref{tab:statics} presents the statistics of the spatial part of three real geo-social networks.
\#Neighbors is defined as the number of people within 500 meters from a specific user and we calculate the average and  maximum number of \#Neighbors. The locality level is defined as the ratio between max. \#Neighbors and avg. \#Neighbors. For example, for the weibo dataset, the max. \#Neighbors is high while avg. \#Neighbors is low, so it has relatively high spatial  locality. 
\begin{table}[htbp]\caption{Statistics of real-world spatial datasets.}\label{tab:statics}
\footnotesize
\begin{tabular}{|c|c|c|c|c|}
\hline
\textbf{Dataset} & \textbf{Size (1K)} & \textbf{\begin{tabular}[c]{@{}c@{}}Max. \\ \#Neighbors\end{tabular}} & \textbf{\begin{tabular}[c]{@{}c@{}}Avg. \\ \#Neighbors\end{tabular}} & \textbf{Locality} \\ \hline\hline
Brightkite \cite{cho2011friendship}     & 51        & 1342                                                                 & 55.67                                                                & medium              \\ \hline
Gowalla  \cite{cho2011friendship}        & 107       & 536                                                                  & 15.38                                                                & low               \\ \hline
Weibo\cite{li2014efficient}         & 1,020     & 976                                                                  & 15.85                                                                & high            \\ \hline
\end{tabular}
\end{table}

\noindent \underline{\textbf{Parameter Settings}} Table.~\ref{tab:parameter} shows the parameter settings for both synthetic and  real datasets. For the synthetic datasets, there are three parameters: the number of points $N$, density and distance threshold $d$. For real datasets, we consider two  
parameters:  the percentage of users sampled from the original datasets $ratio$ and distance threshold $d$. At each time, we vary one parameter while other parameters are set to their underlined default values.

\begin{table}[htbp]\caption{Table of parameter setting.}\label{tab:parameter}

\footnotesize
\begin{tabular}{|c|c|c|}
\hline
\textbf{Category}               & \textbf{Parameter} & \textbf{Values}                                 \\ \hline\hline
\multirow{3}{*}{synthetic} & $N$                & $[50, 100, 200, \underline{300}, 400, 500, 1000] \times 1K$ \\ \cline{2-3} 
                                & $density$          & $[0.001, 0.002, 0.004, \underline{0.008}, 0.012, 0.02]$     \\ \cline{2-3} 
                                & $d$                & $[5, 15, \underline{30}, 45, 60]$                           \\ \hline
\multirow{2}{*}{real}      & $ratio$            & $[0.2, 0.4, 0.6, 0.8, \underline{1.0}]$                     \\ \cline{2-3} 
                                & $d$                & $[100, 300, \underline{500}, 700, 1000]$                    \\ \hline
\end{tabular}
\end{table}

\subsubsection{Scalability}
To test the scalability of our algorithms, we vary $N$ for synthetic datasets and $ratio$ for real datasets. The results are shown in Fig.~\ref{fig:exp_s_uniform}-\ref{fig:exp_s_weibo}. The clique-based algorithm increases exponentially as the number of points increases on all datasets, which demonstrates the NP-hardness nature of the clique enumeration problem. Our exact algorithm significantly outperforms clique-based algorithm: 1) on synthetic datasets, it
outperforms clique by one to two orders of magnitudes and clique can not terminate in 8,000 s for more than 500K data points while exact can return results in 100 seconds. 2) on real datasets, our algorithm outperforms clique especially for  large-sized Weibo dataset where  clique can not return results in 8,000 s when sampling  40\% data points. Our exact and approximation algorithms show strong scalability  on synthetic datasets, since they show near-linear increase when the number of points increases. Notably, on three real datasets, the increase is faster than that on synthetic data since when the $ratio$ becomes larger, not only the number of data points, but also the density increase.  

\subsubsection{Effect of $d$} 
Fig.~\ref{fig:exp_d_uniform}-\ref{fig:exp_d_weibo} present the execution time by varying the distance threshold $d$. For clique, when $d$ increases, the execution time increases dramatically, e.g., it can not return results when $d = 700$ meters for Brightkite and weibo datasets. For exact algorithms, when implemented with both two pruning rules, the execution time is much less than that of using only one pruning rule and it becomes more obvious when $d$ increases. The execution time of approx does not show obvious change w.r.t. $d$ comparing to other algorithms. For the weibo dataset, when $d$ is set as 700 or 1000 meters, approx still return results in short time while other three algorithms cannot terminate within 8,000 s.
 
We briefly give the reason here. For clique, when $d$ increases, the virtual spatial neighborhood network would be more complex and thus enumerating all maximal cliques would be much more time-consuming. For the exact algorithm, as $d$ increases, the number of LSCs and the number of points in each LSC increase, and accordingly the time spent on set comparisons for LSCs would be a major bottleneck. Recall that the time complexity for the exact algorithm has $T \propto n^2d^5$.
Pruning rule 1 decreases the times of set comparisons by excluding comparisons between two LSCs with reference nodes distance larger than $d$. When $d$ increases, the percentage  of set comparisons pruned by this rule would decrease and thus it loses the pruning power. However, for pruning rule 2, it is still very effective when $d$ grows since it is a set-wise pruning method instead of point-wise. 

\subsubsection{Effect of data density} 
Fig.\ref{fig:exp_density} shows the execution time w.r.t. different densities of synthetic datasets. As  density increases, the execution time of both clique and exact+rule1 increases quickly, however, when implemented with both pruning rules, the exact algorithm grows much more slowly. The effectiveness of pruning rule II becomes more obvious with the increase of density. For Gaussian distributed datasets, which have higher locality than uniform data, the pruning rule II reduces more than 50\% execution time than exact+rule1 when density is set as 0.02. Density affects exact algorithms due to the same reason as $d$ does, both of them increase the number and set size of LSCs, which makes LSCs comparisons more costly. The execution time of approximation algorithm increases very slowly since there are  only at most three set comparisons needed to be conducted for each local approximate spatial cluster even though density changes. 

\begin{figure}[th]
    \centering
    \subfigure[time vs. $density$ (Uniform)]{\label{fig:exp_density_uniform}
    \begin{minipage}[b]{0.23\textwidth}
        \centering 
        \includegraphics[width=\textwidth]{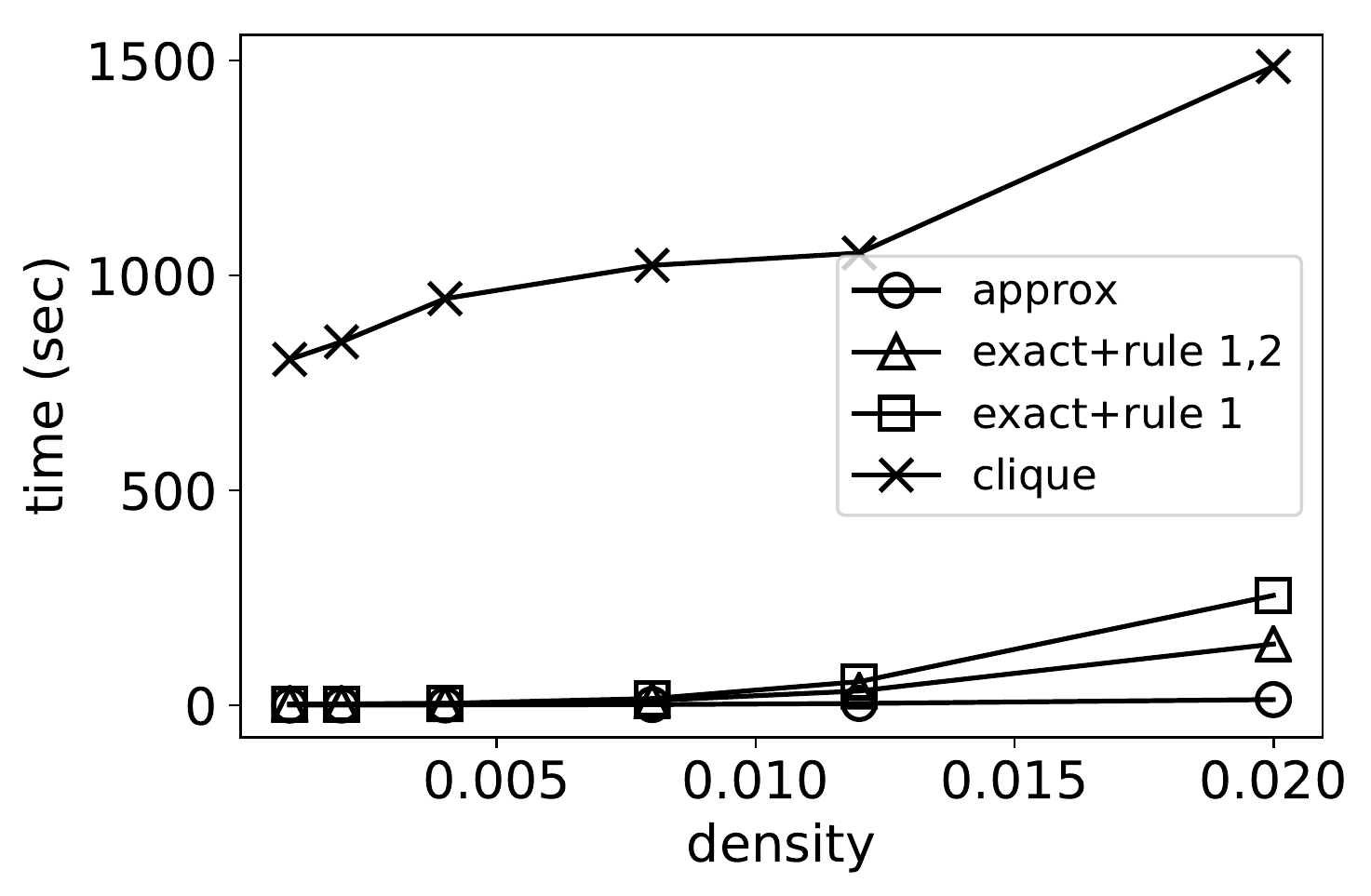}
    \end{minipage}}
    \subfigure[time vs. $density$ (Gaussian)]{\label{fig:exp_density_gaussian}
    \begin{minipage}[b]{0.23\textwidth}
        \centering 
        \includegraphics[width=\textwidth]{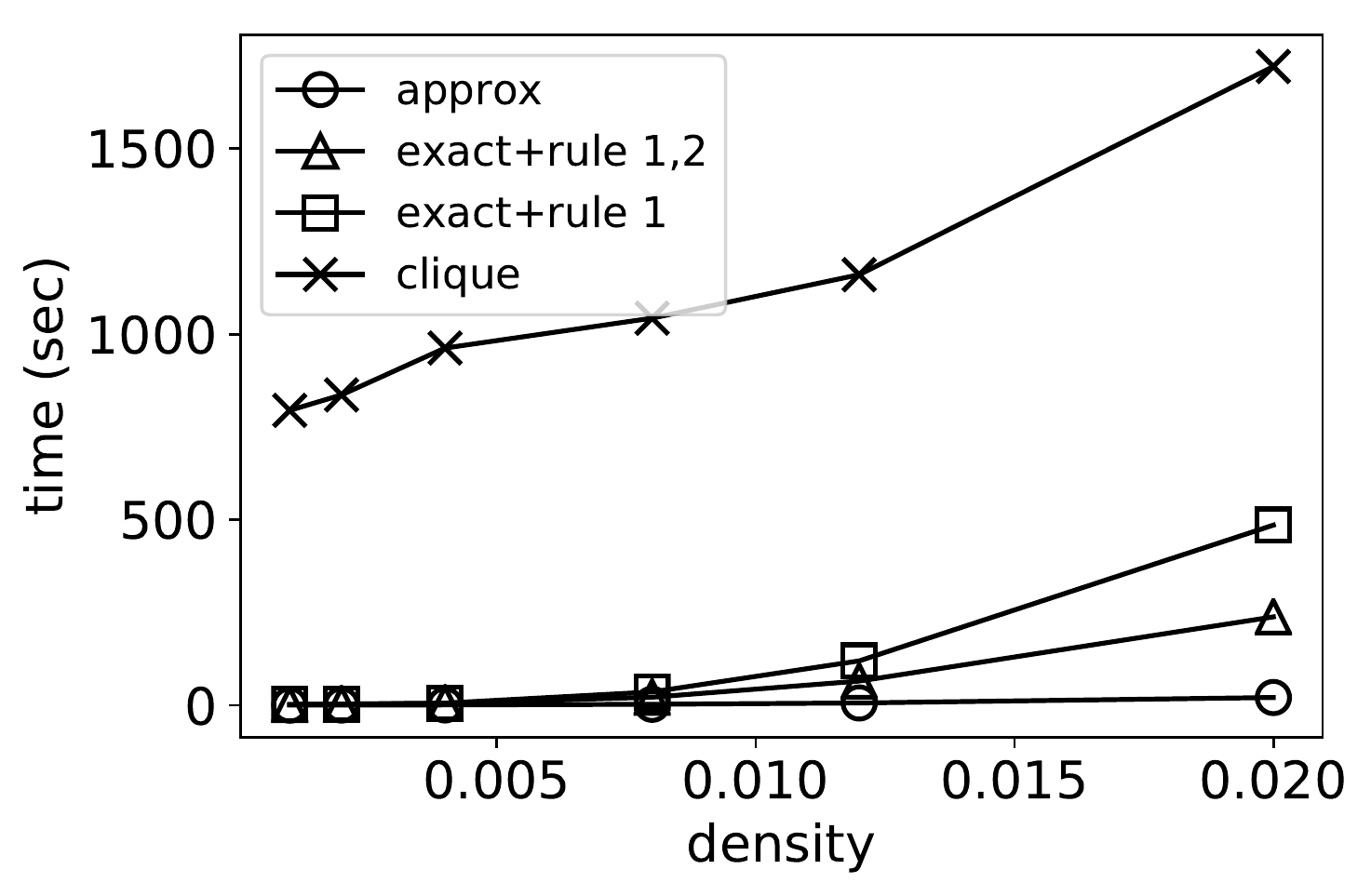}
    \end{minipage}}
    \setlength{\belowcaptionskip}{-2.5pt}
    \caption{Experiment results of the influence of $density$.} \label{fig:exp_density}
\end{figure}
\vspace{-1em}
\begin{figure}[htbp]
    \centering
    \subfigure[when $d$ varies (Gowalla)]{
    \begin{minipage}[b]{0.23\textwidth}
        \centering 
        \includegraphics[width=\textwidth]{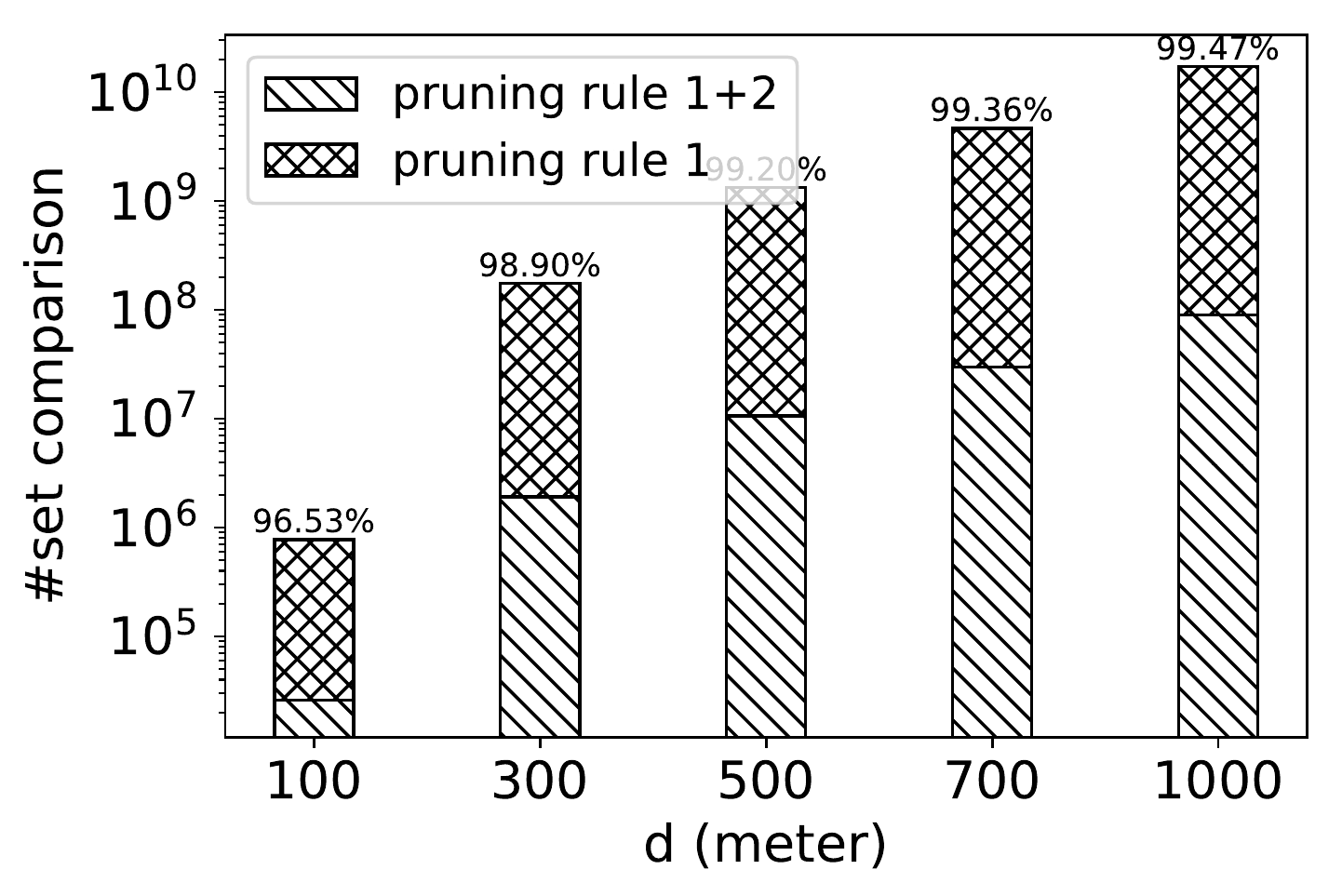}
    \end{minipage}}
    \subfigure[when $density$ varies (Uniform)]{
    \begin{minipage}[b]{0.23\textwidth}
        \centering 
        \includegraphics[width=\textwidth]{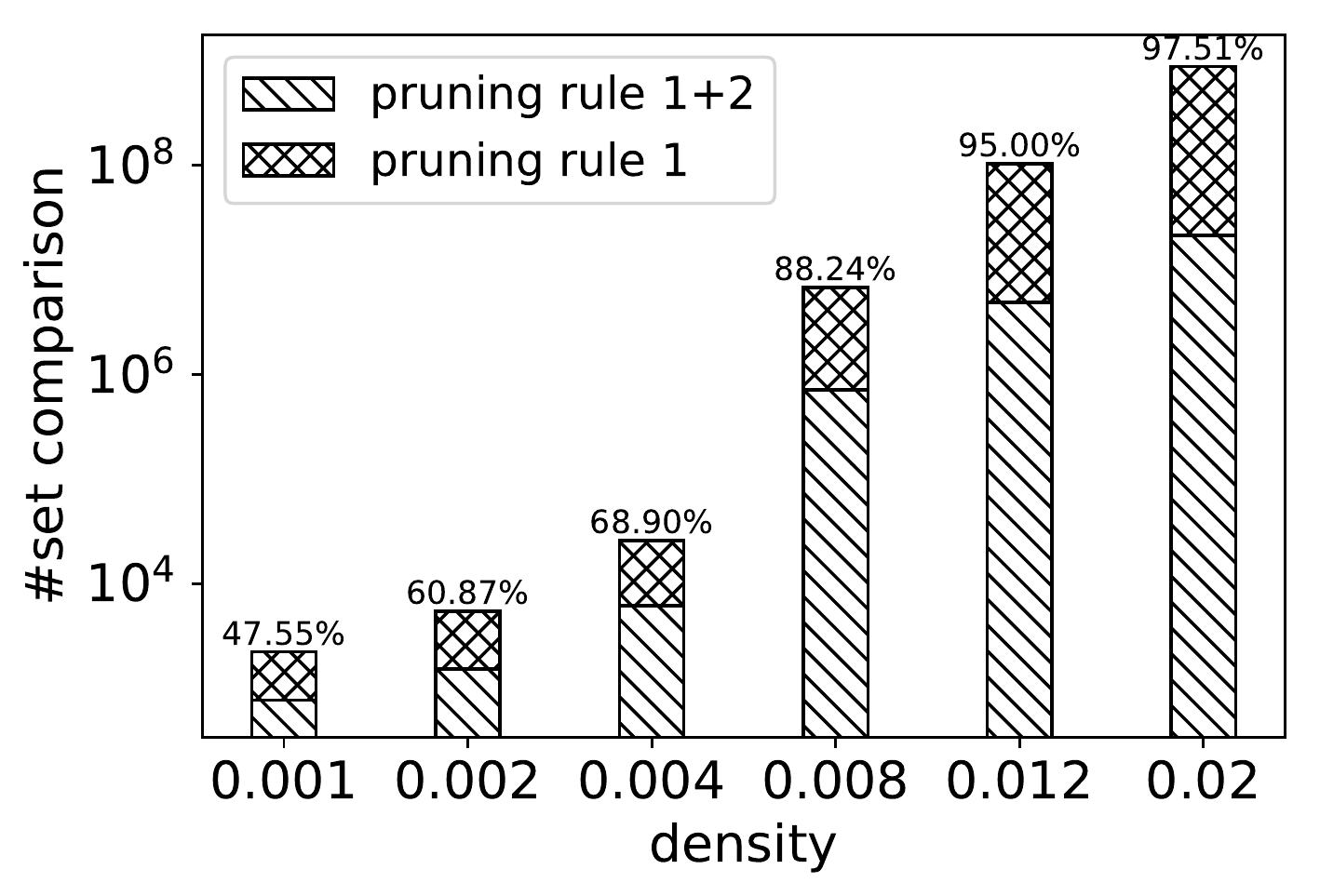}
    \end{minipage}}
    \setlength{\belowcaptionskip}{-13pt}
    \caption{The effectiveness of pruning rule 2.} \label{fig:exp_pr}
\end{figure}

\subsubsection{Effectiveness of pruning rules}
As we have analyzed before, the bottleneck of time complexity for exact algorithms is set comparisons for local spatial clusters and the two pruning rules decrease time by reducing set comparisons at different levels. As we have shown, when $d$ or $density$ increase, the pruning rule 2 become more effective in reducing execution time. To further present the effectiveness of different pruning rules, we record the numbers of set comparisons when implementing only first pruning rule or both rules. Fig.~\ref{fig:exp_pr} shows the results on Gowalla and uniform synthetic datasets. Pruning rule 2 can help decrease the number of set comparisons by orders and when $d$ or $density$ increase, it is observed to reduce more set comparisons. When $d$ is set as 1 km on Gowalla dataset, pruning rule 2 can reduce more than $99\%$ set comparisons of  exact+rule1. 
\begin{figure}[htbp]
    \centering
    \subfigure[time vs. $d$ (Brightkite)]{\label{fig:exp_d_framework_1}
    \begin{minipage}[b]{0.15\textwidth}
        \centering 
        \includegraphics[width=\textwidth]{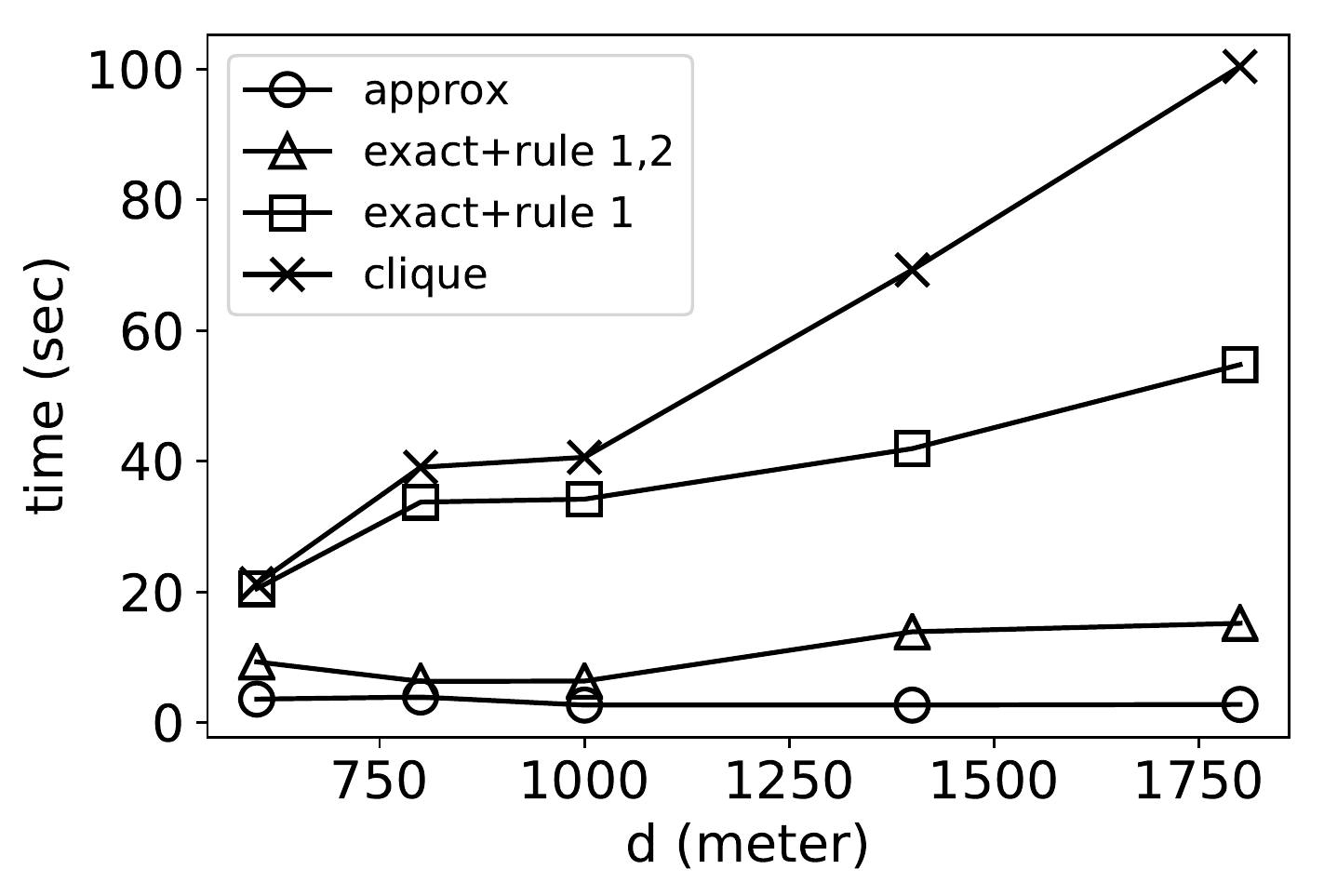}
    \end{minipage}}
    \subfigure[time vs. $d$ (Gowalla)]{\label{fig:exp_d_framework_2}
    \begin{minipage}[b]{0.15\textwidth}
        \centering 
        \includegraphics[width=\textwidth]{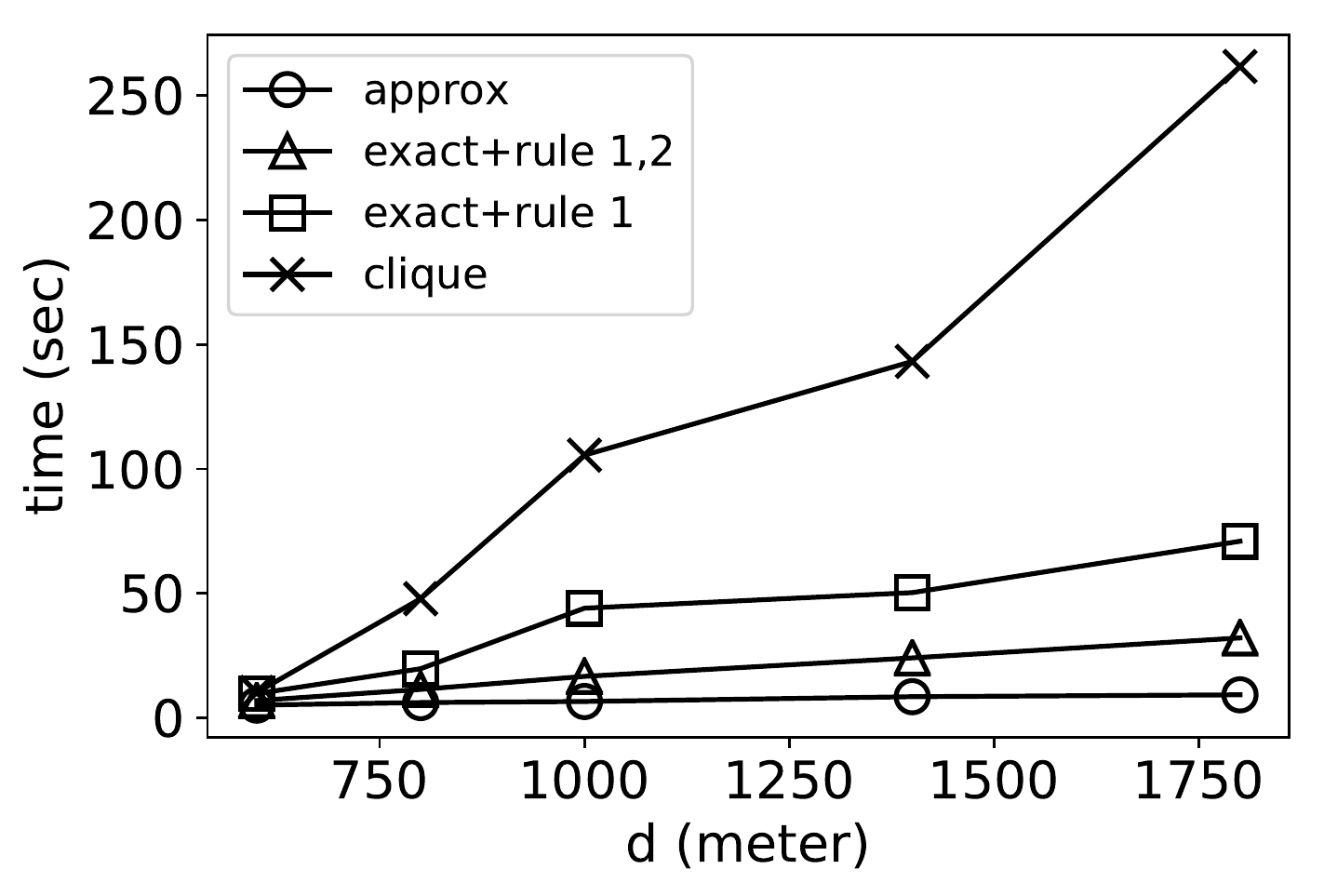}
    \end{minipage}}
    \subfigure[time vs. $d$ (Weibo)]{\label{fig:exp_d_framework_3}
    \begin{minipage}[b]{0.15\textwidth}
        \centering 
        \includegraphics[width=\textwidth]{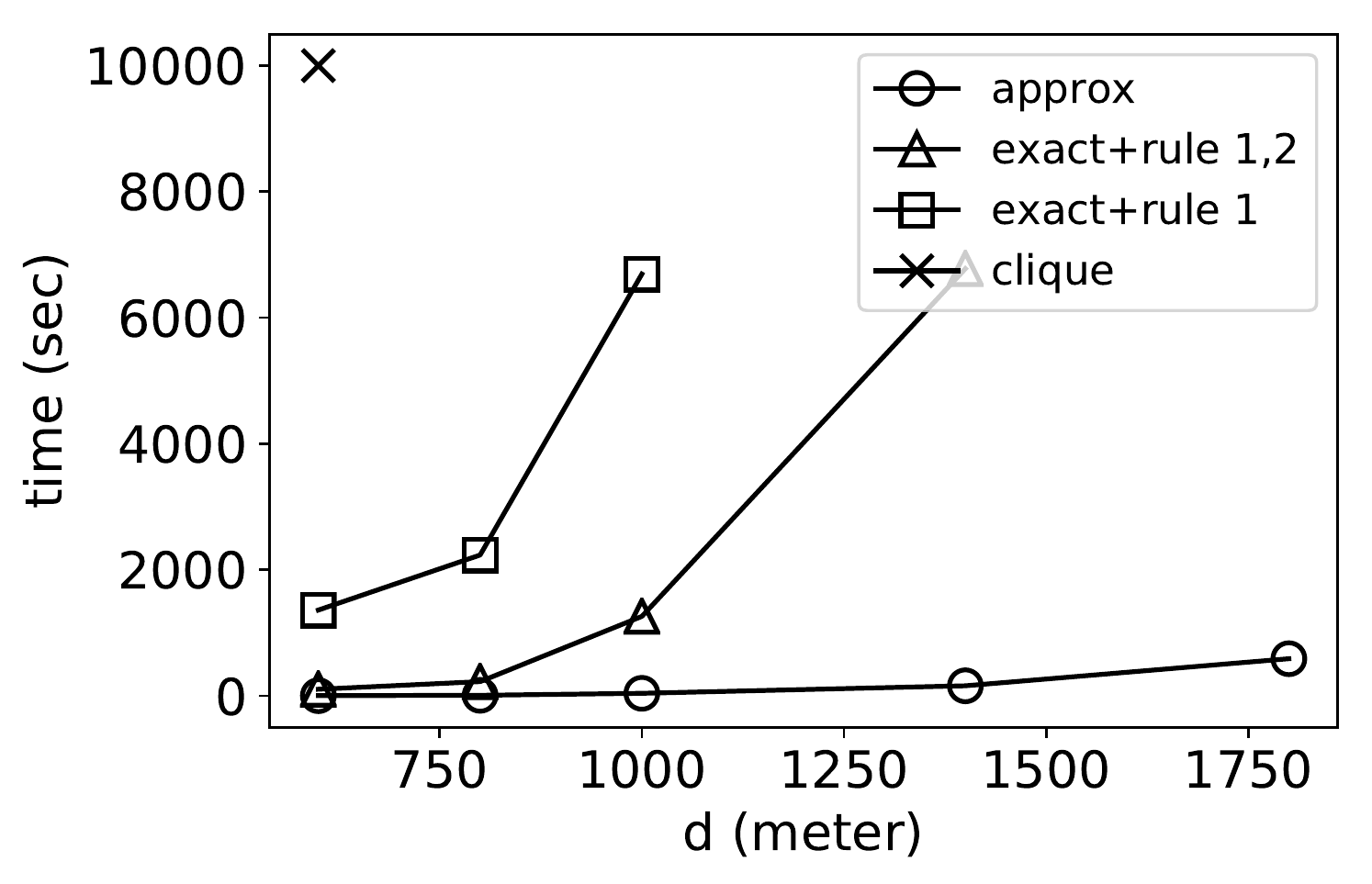}
    \end{minipage}}
    
    \subfigure[time vs. $k$ (Brightkite)]{\label{fig:exp_k_framework_1}
    \begin{minipage}[b]{0.15\textwidth}
        \centering 
        \includegraphics[width=\textwidth]{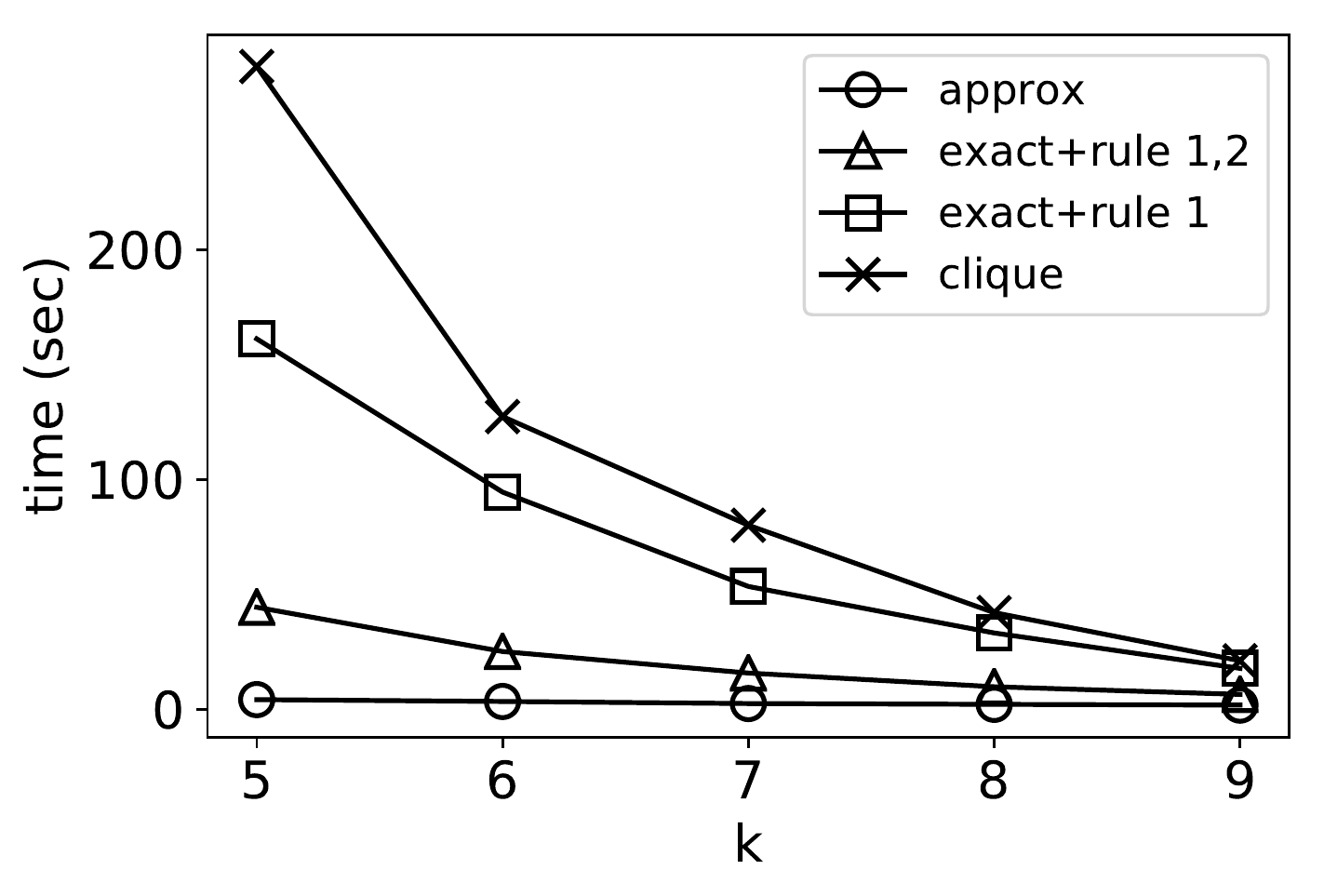}
    \end{minipage}}
    \subfigure[time vs. $k$ (Gowalla)]{\label{fig:exp_k_framework_2}
    \begin{minipage}[b]{0.15\textwidth}
        \centering 
        \includegraphics[width=\textwidth]{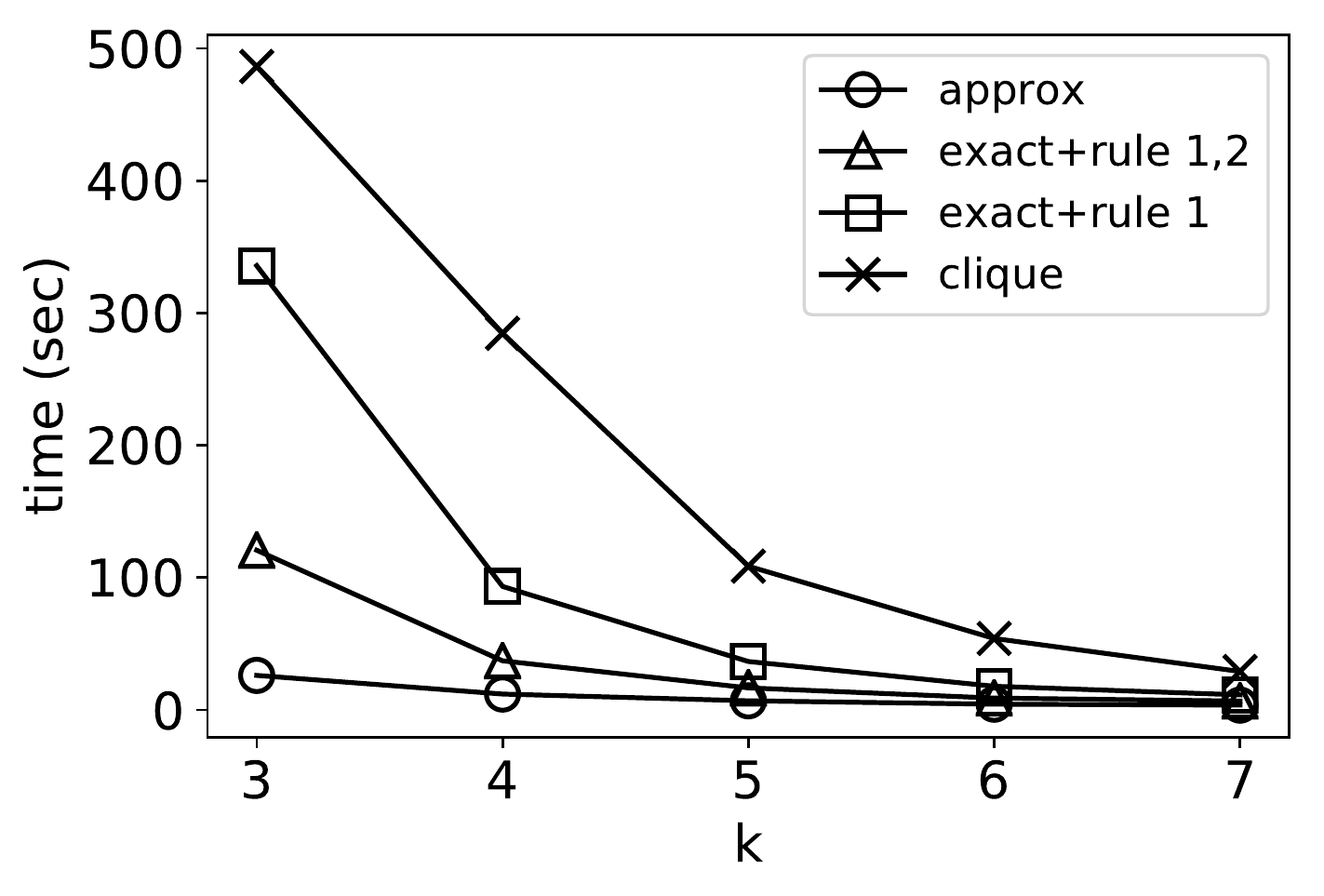}
    \end{minipage}}
    \subfigure[time vs. $k$ (Weibo)]{\label{fig:exp_k_framework_3}
    \begin{minipage}[b]{0.15\textwidth}
        \centering 
        \includegraphics[width=\textwidth]{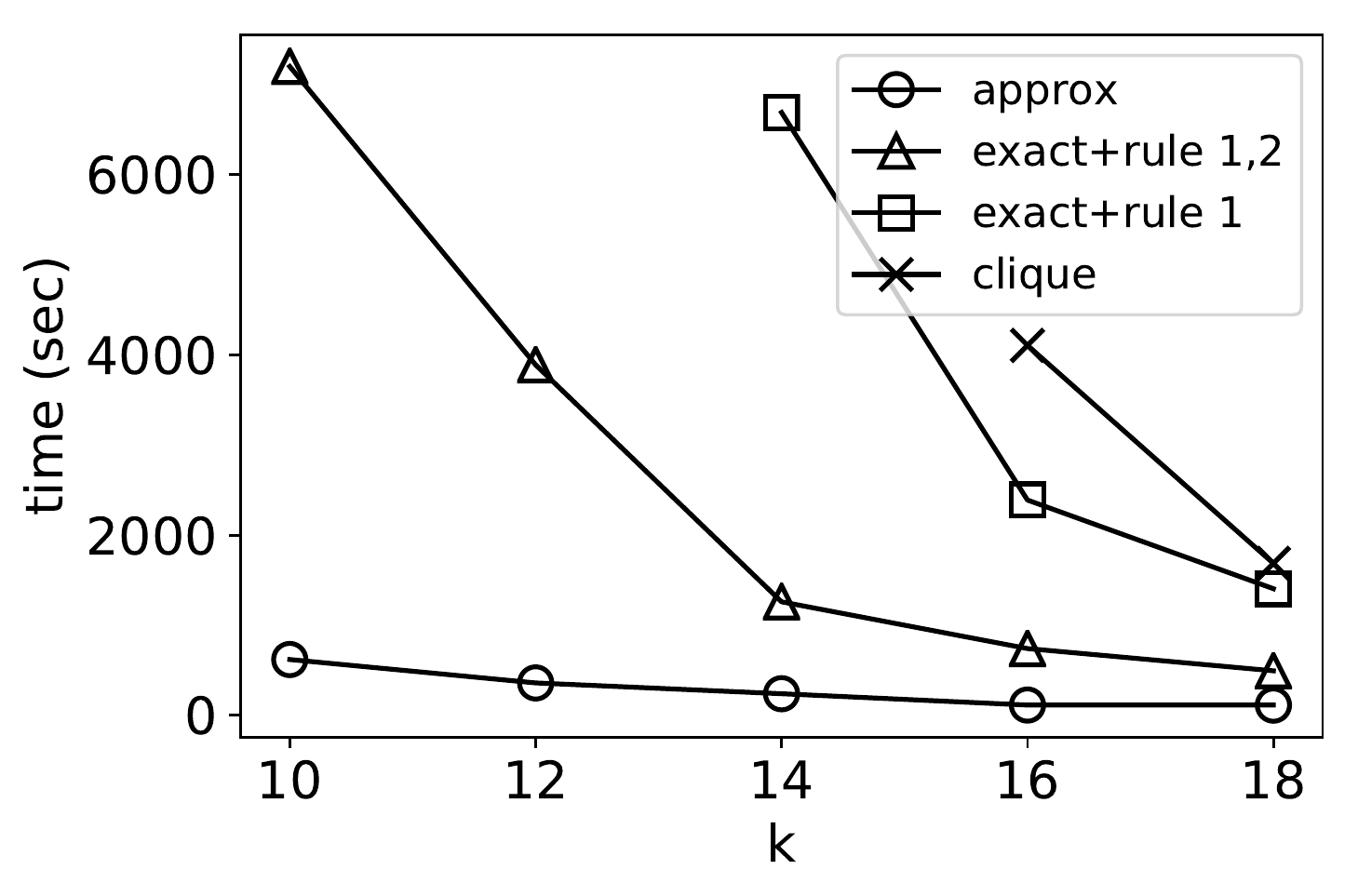}
    \end{minipage}}
     \caption{MCC framework evaluation results of the influence of $k$ and $d$ on real geo-social networks.} 
     \label{fig:exp_kd_framework}
\end{figure}
\vspace{-1em}

\subsection{Framework Evaluation}
The previous subsection presents the results for spatial algorithm and this part will demonstrate the efficiency and effectiveness for the whole framework to detect all maximal co-located communities, which is shown in Algorithm~\ref{algo:framework}. The experiments are conducted on three real world geo-social networks and Table~\ref{tab:staticsSocial} shows the statistics of social network information. Note that before running the algorithms, we do some data cleaning works for original datasets, e.g., deleting all self-loop edges. 
\begin{table}[htbp]\caption{Statistics of social network information of real-world datasets.}\label{tab:staticsSocial}
\footnotesize
\begin{tabular}{|c|c|c|c|c|}
\hline
\textbf{Dataset} & \textbf{\#Vertices} & \textbf{\#Edges } & \textbf{Avg. Degree} &  \textbf{Max Degree} \\ \hline\hline
Brightkite & 58K & 214K & 7 & 1134 \\ \hline
Gowalla   &69K & 175K & 5 & 739\\ \hline
Weibo   & 1,019K & 8,245K & 16& 1100 \\ \hline
\end{tabular}
\end{table}
\vspace{-1.4em}
\begin{table}[htbp]\caption{Parameter setting for framework.}\label{tab:parameter_fram}
\footnotesize
\begin{tabular}{|c|c|c|}
\hline
\textbf{Dataset} & \textbf{$k$ values} & \textbf{$d$ values (meter)} \\ \hline\hline
Brightkite &  $[5, 6, \underline{7}, 8, 9]$ & $[600, 800, \underline{1000}, 1400, 1800]$      \\ \hline
Gowalla & $[3, 4, \underline{5}, 6, 7]$ & $[1000, 1250, \underline{1500}, 1750, 2000]$             \\ \hline
Weibo   & $[10, 12, \underline{14}, 16, 18]$ &    $[1000, 1250, \underline{1500}, 1750, 2000]$          \\ \hline
\end{tabular}
\end{table}
\vspace{-0.6em}

For social constraint in the framework, we implement both $k$-core and $k$-truss, however, due to the limit of page, we  only present the evaluation results of framework based on  $k$-core, and the results on $k$-truss have very similar performance. To make the framework more efficient, we adopt a simple pruning rule similar to the one used in \cite{chen2018maximum}. The pruning rule is based on the fact that a MCC must be a subset of a $k$-core (or $k$-truss), thus we first generate all $k$-cores from social network by applying core decomposition algorithm, and then apply our framework in each $k$-core to get all MCCs. Table~\ref{tab:parameter_fram} shows the settings of two parameters: $k$ (of $k$-core) and distance threshold $d$.

\subsubsection{Effect of $d$}
Fig.~\ref{fig:exp_d_framework_1}-\ref{fig:exp_d_framework_3} show the total execution time w.r.t. $d$. Since we apply spatial algorithm in each $k$-core instead of for all data points, the execution time for detecting MCCs is much less than that of detecting all spatial clusters presented in the last subsection. Clique is the slowest one on all datasets and increases dramatically w.r.t. $d$. The exact algorithms present efficiency on Brightkite and Gowalla datasets and the time does not increase much as $d$ increases. However, for Weibo dataset, time increases quickly with $d$. A possible reason is that data points in Weibo have much higher degree and there can be a $k$-core consisting of many  data points, and applying the spatial algorithm in that core can still be time consuming and the change of time w.r.t. $d$ is similar to the spatial algorithm experiment result as Fig.~\ref{fig:exp_d_weibo} shows.

\subsubsection{Effect of $k$}
Fig.~\ref{fig:exp_k_framework_1}-\ref{fig:exp_k_framework_3} present results on three real datasets by changing $k$. The execution time for all datasets decreases dramatically when $k$ turns larger. The reason is that, when $k$ increases, each $k$-core on social network would have smaller size, thus applying our framework on each $k$-core would save time. 

\subsubsection{Correctness of approximation algorithm} The above results have already demonstrated the efficiency and scalability of our approximation algorithm. To further validate its correctness, in each community detected by applying the approximation algorithm as spatial algorithm, we calculate the maximum pairwise distance as the community distance and Fig.\ref{fig:appro} present the average and maximum community distance of all communities. It shows that the community distance is always bounded by $\sqrt{2}\cot d$ and the average  distance is normally smaller than $d$ which means many communities have distance smaller than exact  threshold.
\begin{figure}[htbp]
    \centering
    \begin{minipage}[b]{0.23\textwidth}
        \centering 
        \includegraphics[width=\textwidth]{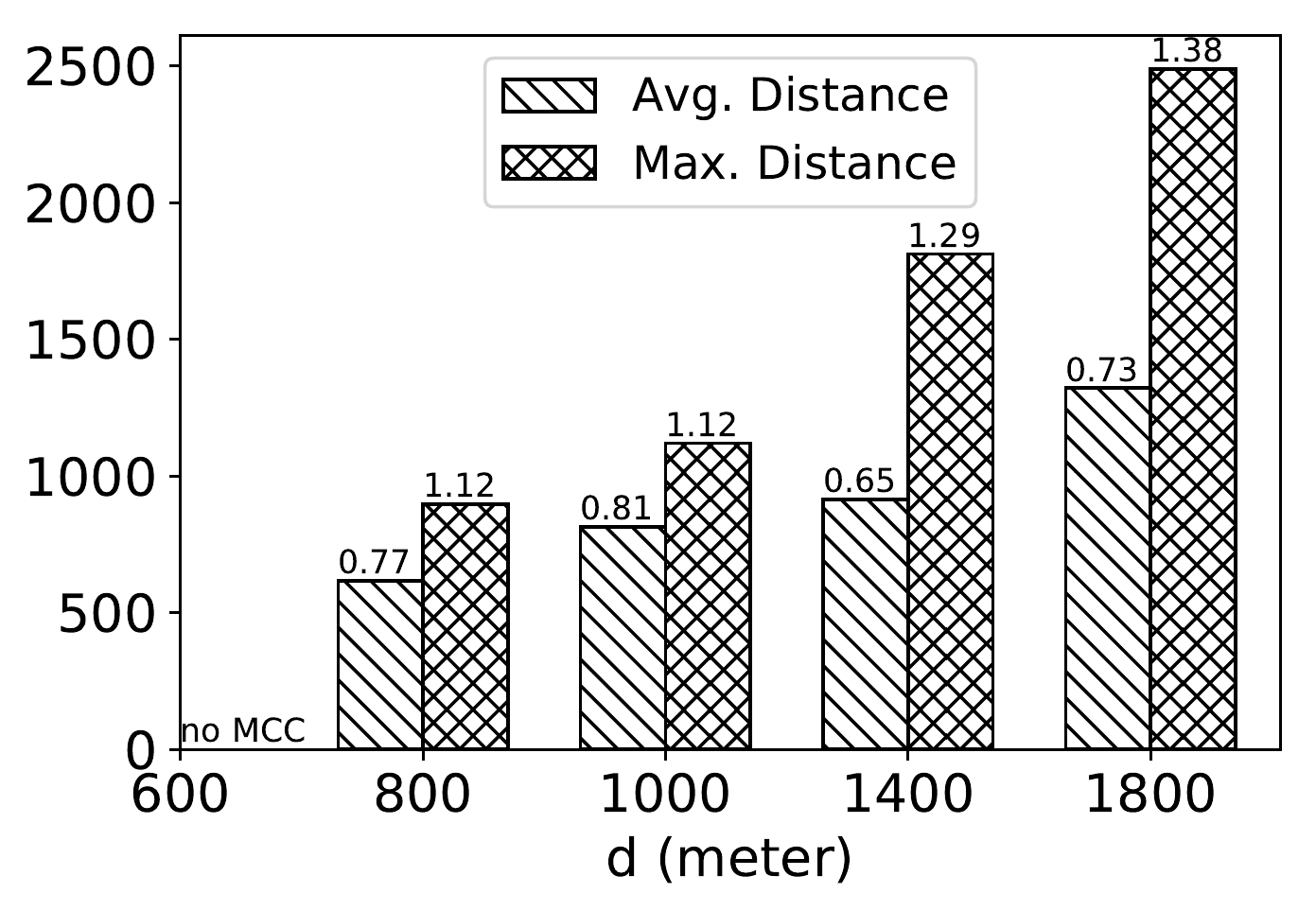}
    \end{minipage}
    \begin{minipage}[b]{0.23\textwidth}
        \centering 
        \includegraphics[width=\textwidth]{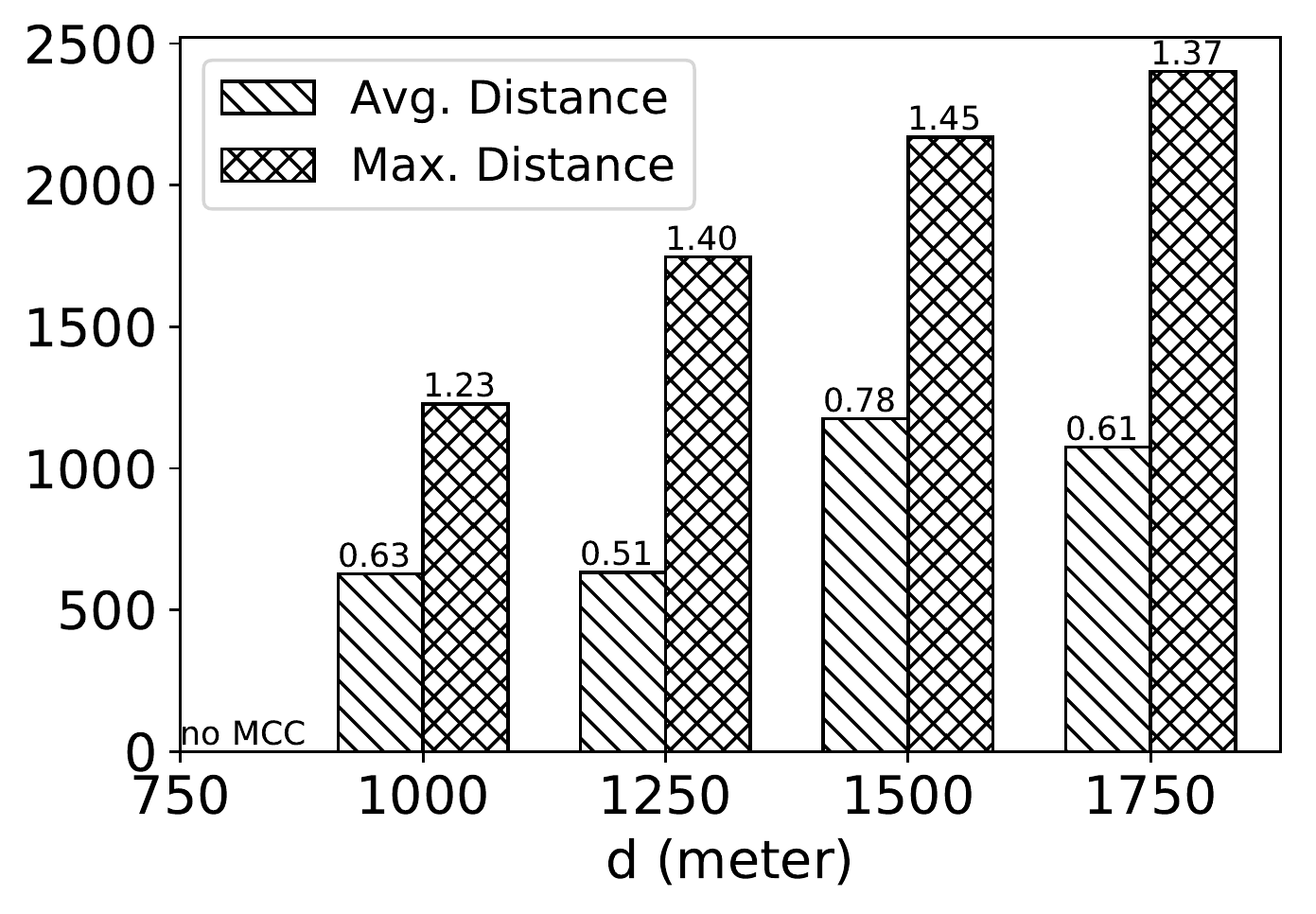}
    \end{minipage}
    \setlength{\belowcaptionskip}{-2pt}
    \caption{ Distance of MCCs on Brightkite and Gowalla.} \label{fig:appro}
\end{figure}
\subsection{Case Studies}
We implement the algorithms in \cite{fang2017effective, chen2018maximum}.  Both the two papers have different problem definitions with us. 
\cite{fang2017effective} provides a community search algorithm where the distance constraint is not defined in the same way as our work, 
and \cite{chen2018maximum} solves the problem to find only the maximum MCC and it applies all-pair distance constraint as spatial constraint. 
We conduct two case studies  on Gowalla and Brightkite  datasets by using our  approximation algorithm and  compare the  result with that of \cite{fang2017effective, chen2018maximum}  respectively to demonstrate the effectiveness of our problem and algorithm.

\subsubsection{Bounded Spatial Distance Guarantee} We  conduct experiment on Gowalla dataset and set  $k = 2$ and $d=2 km$. 
Fig.\ref{fig:case_study_1} (a) shows all MCCs detected by our algorithm in a small region. Each circle is the location of the MCC center and the color indicates the number of community members. There are 20 MCCs in this region. 
We also  present two communities  shown as the red  circles in (b) and (c)  respectively retrieved by using the community search algorithm in 
\cite{fang2017effective} with two  different query users.   In (b), the purple circle with diameter $d$ covers a MCC found by our algorithm. The method in \cite{fang2017effective} only returns a small subset of our MCC in order to make sure that the covering circle has the minimum radius. In (c),  \cite{fang2017effective} returns a community that has  a minimum covering circle with diameter much larger than $d = 2$km, and is not detected as a MCC by our algorithm. 
  As Fig.\ref{fig:case_study_1} shows,   \cite{fang2017effective} does not allow user to specify the distance threshold, and different  MCCs do not have consistent distance bound. For a query user who have many nearby friends,  \cite{fang2017effective} may return a small subset, however, for user who do not have nearby friends, it still returns a cluster with large distance among cluster members. 
 
\subsubsection{Diverse MCCs} On Brightkit dataset, by setting $d = 1$ km and  $k = 4$, we detect 32 MCCs. We conduct hierarchical cluster  analysis on 32  sets where Jaccard distance is used to measure the set distance. As Fig.\ref{fig:case_study_2} presents, there are five communities  that do not share any common user and   there are 9 communities when distance is set as 0.6. The results indicate that many MCCs  have diverse set members. However, the problem in  \cite{chen2018maximum} only find one maximum MCC and ignore all others despite the fact that other MCCs are equally meaningful and very different from members in the maximum MCC.

\begin{figure}[thbp]
    \centering
    \includegraphics[width=0.4\textwidth]{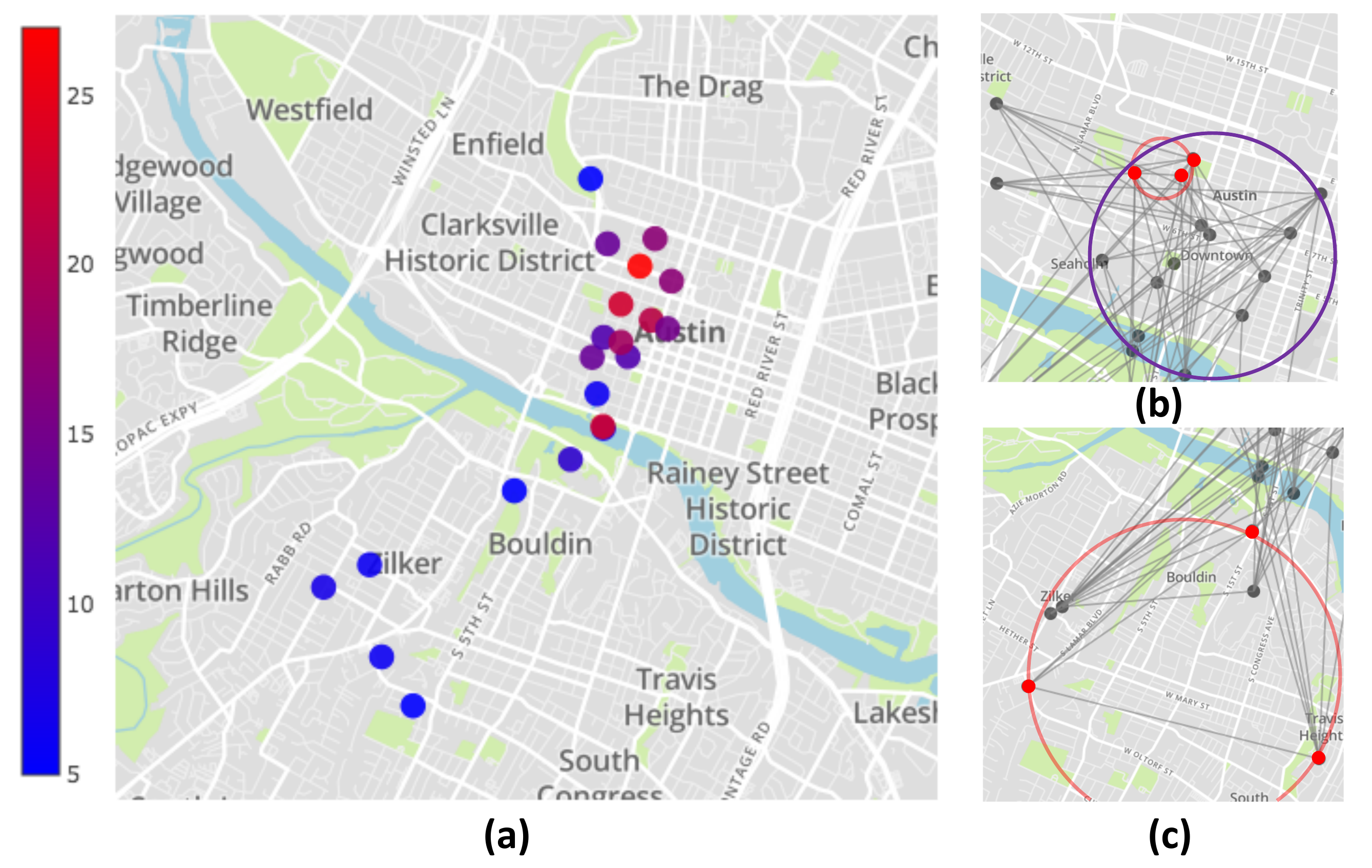}
    \setlength{\belowcaptionskip}{-17pt}
    \caption{2-MCCs detection  and results of \cite{fang2017effective} .} \label{fig:case_study_1}
\end{figure}

\begin{figure}[htbp]
    \centering
    \includegraphics[width=0.47\textwidth]{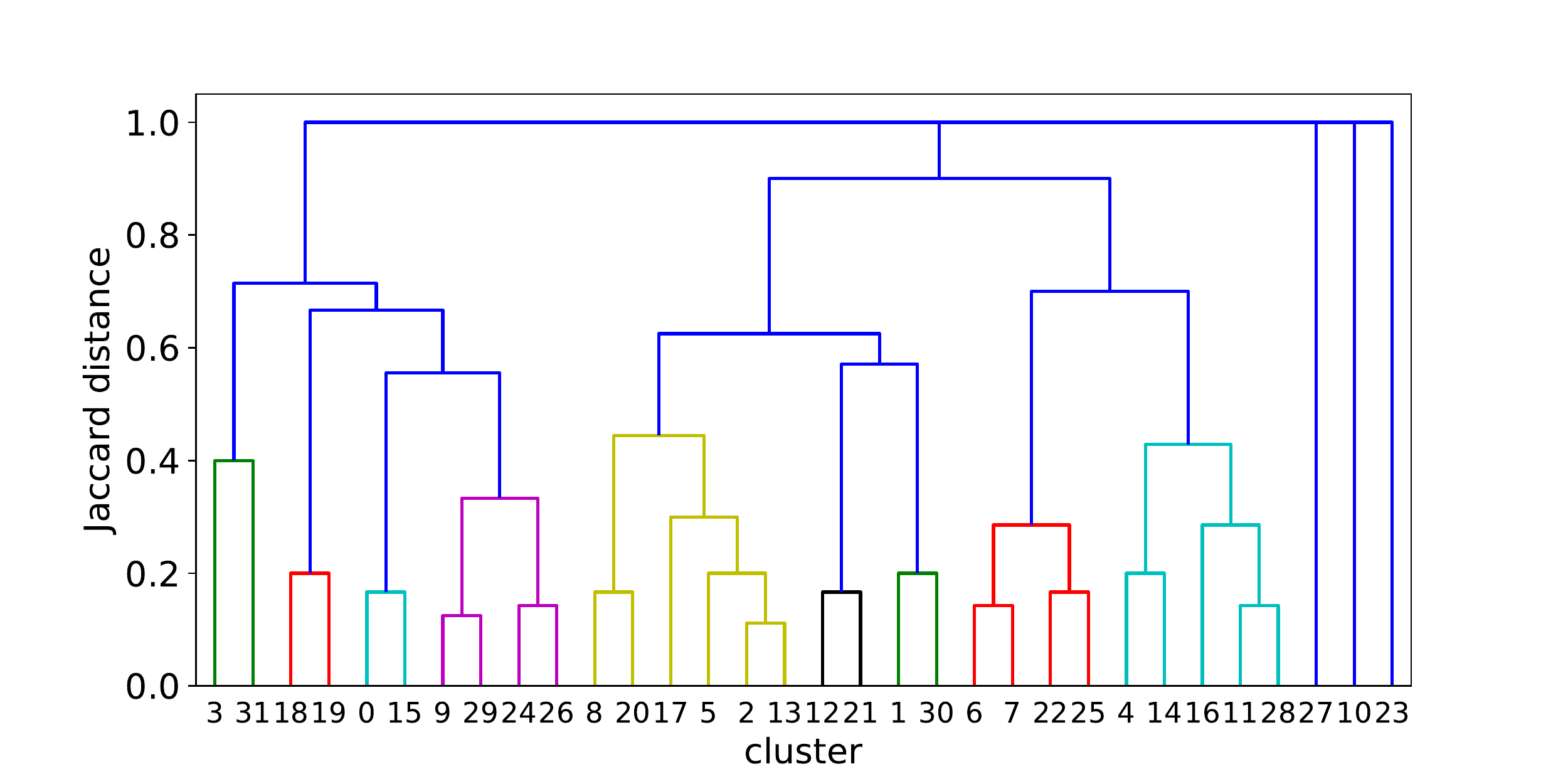}
    \setlength{\belowcaptionskip}{-2pt}
    \caption{Clustering Dendrogram for MCCs.} \label{fig:case_study_2}
\end{figure}

\section{Conclusion}\label{sec:conclusion}
In this paper, we investigate the $d$-MCCs detection problem on large scale geo-social networks. Unlike prior work that searches MCC for given query nodes or finds \emph{one} maximal MCC, we solve a community detection problem which detects \emph{all} communities satisfying both social and spatial cohesiveness constraints. To make our solution compatible with existing community detection techniques, we design a uniform framework so that existing techniques like $k$-core and $k$-truss decomposition can be easily plugged in. Besides generality and compatibility, our MCC detection framework improves efficiency thanks to our spatial constraint checking algorithms and several engineering level optimization. The effectiveness and efficiency of both the spatial algorithm and the whole MCC detection framework are demonstrated by using three real-world datasets and two synthetic datasets with various parameter settings. 

\vspace{-1em}
%

%
\bibliographystyle{ACM-Reference-Format}
\bibliography{my}

%

\end{document}